\DeclareMathOperator{\NP}{\mathbf{NP}}
\DeclareMathOperator{\Psp}{\mathbf{PSPACE}}
\DeclareMathOperator{\EXPSPACE}{\mathbf{EXPSPACE}}
\DeclareMathOperator{\NEXPTIME}{\textbf{NEXPTIME}}
\DeclareMathOperator{\NPSPACE}{\mathbf{NPSPACE}}
\DeclareMathAlphabet{\mathpzc}{OT1}{pzc}{m}{it}
\newcommand{\Beg}{\textit{beg}}
\newcommand{\End}{\textit{end}}
\newcommand{\tpl}[1]{(#1)}
\newcommand{\Nat}{\mathbb{N}}
\newcommand{\true}{\top}
\theoremstyle{plain}
\newtheorem{proposition}{Proposition}
\newtheorem{theorem}[proposition]{Theorem}
\theoremstyle{definition}
\newtheorem{definition}[proposition]{Definition}
\theoremstyle{remark}
\newtheorem{claim}[proposition]{Claim}
\newcommand{\details}[1]{{}}%
\newcommand{\dummy}{{\textit{dummy}}}
\newcommand{\Inst}{{\textit{L}}}
\newcommand{\halt}{{\textit{halt}}}
\newcommand{\init}{{\textit{init}}}
\newcommand{\main}{{\textit{main}}}
\newcommand{\inc}{{\mathsf{inc}}}
\newcommand{\dec}{{\mathsf{dec}}}
\newcommand{\zero}{{\mathsf{zero}}}
\newcommand{\instr}{{\textit{op}}}
\newcommand{\gainy}{{\textit{gainy}}}
\newcommand{\From}{{\textit{from}}}
\newcommand{\To}{{\textit{to}}}
\newcommand{\cont}{{\textit{sec}}}
\newcommand{\Tag}{{\textit{Tag}}}
\newcommand{\start}{\mathsf{s}}
\newcommand{\Ending}{\mathsf{e}}
\newcommand{\startTime}{\mathsf{s}}
\newcommand{\der}[1]{\ensuremath{\;\;{\mathop{{ %
            \longrightarrow}}\limits^{{#1}}}\!}\;\;} %
\newcommand{\derG}[1]{\ensuremath{\;\;{\mathop{{ %
            \longrightarrow}}\limits^{{#1}}}\!}_\gainy\;\;} %
\newcommand{\TA}{\text{\sffamily TA}}
\newcommand{\MTL}{\text{\sffamily MTL}}
\newcommand{\TPTL}{\text{\sffamily TPTL}}
\newcommand{\MITL}{\text{\sffamily MITL}}
\newcommand{\MITLR}{\text{\sffamily MITL}_{(0,\infty)}}
\newcommand{\LTL}{\text{\sffamily LTL}}
\newcommand{\RealP}{{\mathbb{R}_+}}
\newcommand{\val}{{\mathit{val}}}
\newcommand{\Main}{{\mathit{Main}}}
\newcommand{\EqTime}{{\mathit{EqTime}}}
\newcommand{\Past}{{\mathit{past}}}
\newcommand{\Res}{\textit{Res}}
\newcommand{\Au}{\ensuremath{\mathcal{A}}}
\newcommand{\TLang}{{\mathcal{L}_T}}
\newcommand{\Prop}{\mathcal{P}}
\newcommand{\StrictUntil}{\textsf{U}}
\newcommand{\Eventually}{\textsf{F}}
\newcommand{\Always}{\textsf{G}}
\newcommand{\Deriv}{{\mathit{Deriv}}}
\newcommand{\Intv}{{\mathit{Intv}}}
\newcommand{\IntvR}{{\mathit{Intv}_{(0,\infty)}}}
\title{Complexity of Timeline-Based Planning over Dense \\ Temporal Domains: Exploring the Middle Ground}
\author{Laura Bozzelli \qquad Adriano Peron
\institute{University of Napoli ``Federico II'', Napoli, Italy}
\email{lr.bozzelli@gmail.com \qquad adrperon@unina.it}
\and
Alberto Molinari \qquad Angelo Montanari
\institute{University of Udine, Udine, Italy}
\email{molinari.alberto@gmail.com \qquad angelo.montanari@uniud.it}
}
\begin{document}

\maketitle

\begin{abstract}
In this paper, we address complexity issues for timeline-based planning over dense temporal domains.
The planning problem is modeled by means of a set of independent, but interacting, components, each one represented by a number of state variables, whose behavior over time (timelines) is governed by a set of temporal constraints (synchronization rules).
While the temporal domain is usually assumed to be discrete, here we consider the dense case.  
Dense timeline-based planning has been recently shown to be undecidable in the general case; decidability ($\NP$-completeness) can be recovered by restricting to purely existential synchronization rules (\emph{trigger-less rules}).
In this paper, we investigate the unexplored area of intermediate cases in between these two extremes.
We first show that \textbf{decidability} and \textbf{non-primitive recursive hardness} can be proved by admitting synchronization rules with a trigger, but forcing them to suitably check constraints only in the future with respect to the trigger (\emph{future simple rules}). More \lq\lq tractable\rq\rq{}
results can be obtained by additionally constraining the form of intervals in future 
simple rules: 
$\EXPSPACE$-completeness is guaranteed by avoiding singular intervals, 
 $\Psp$-completeness by admitting only intervals of the forms $[0,a]$ and $[b,+\infty[$.
\end{abstract}

\section{Introduction}
In this paper, we explore the middle ground of timeline-based planning over dense temporal domains.
%
\emph{Timeline-based planning} can be viewed as an alternative to the classical action-based approach to planning. Action-based planning aims at determining a sequence of actions that, given the initial state of the world 
and a goal, transforms, step by step, the state of the world until a state that satisfies the goal is reached. 
%
Timeline-based planning focuses on what has to happen in order to satisfy the goal instead of what an agent has to do. It models the planning domain as a set of independent, but interacting, components, each one consisting of a number of state variables. The evolution of the values of state variables over time is described by means of a set of timelines (sequences of tokens), and it is governed by a set of transition functions, one for each state variable, and a set of  synchronization rules, that constrain the temporal relations among state variables. Figure~\ref{fig:timelineEx} gives an account of these notions.

Timeline-based planning has been successfully exploited in a number of application domains, e.g.,~\cite{barreiro2012europa,CestaCFOP07,aspen2010,FrankJ03,JonssonMMRS00,Muscettola94}, but a systematic study of its expressiveness and complexity has been undertaken only very recently. The temporal domain is commonly assumed to be discrete, 
the dense case being dealt with by forcing an artificial discretization of the domain. 
In~\cite{GiganteMCO16}, Gigante et al.\ showed that timeline-based planning with bounded temporal relations and token durations, and no temporal horizon, is $\EXPSPACE$-complete and  expressive enough to capture action-based temporal planning. Later, in~\cite{GiganteMCO17}, they proved that $\EXPSPACE$-completeness still holds for timeline-based planning with unbounded interval relations, and that the problem becomes $\NEXPTIME$-complete if an upper bound to the temporal horizon is added. 

Timeline-based planning (TP for short) over a dense temporal domain has been studied in~\cite{kr18}.
Having recourse to a dense time domain is important for expressiveness: only in such a domain one can really express interval-based properties of planning domains, and can abstract from unnecessary (or even \lq\lq forced\rq\rq) details which are often artificially added due to the necessity of discretizing time.
The general TP problem  has been shown to be \textbf{undecidable} even when a single state variable is used. Decidability can be recovered by suitably constraining the logical structure of synchronization rules. In the general case, a synchronization rule allows a universal quantification over the tokens of a timeline (\emph{trigger}). By disallowing the universal quantification and retaining only rules in purely existential form (\emph{trigger-less rules}), the problem becomes $\NP$-complete~\cite{ictcs18}. These two bounds identify a large unexplored area of intermediate cases where to search for 
a balance between expressiveness and complexity. 
Investigating such cases is fundamental: as a matter of fact, trigger-less rules can essentially be used only to express initial conditions
and the goals of the problem, while trigger rules, much more powerful, are useful to specify invariants and response requirements. Thus one needs somehow a way of re-introducing the latter rules in order to recover their expressive power at least partially.

In this paper, we investigate 
the restrictions under which the universal quantification of triggers can be admitted though retaining decidability. 
When a token is \lq\lq selected\rq\rq{} by a trigger, the synchronization rule allows us to compare 
tokens of the timelines both preceeding (past) and following (future) the trigger token. 
The first restriction we consider consists in limiting the comparison to tokens in the future with respect to the trigger (\emph{future semantics of trigger rules}). 
The second restriction we consider imposes that, in a trigger rule, the name of a non-trigger token appears exactly once in the \emph{interval atoms} of the rule (\emph{simple trigger rules}). 
This syntactical restriction avoids comparisons of multiple token time-events with a non-trigger reference time-event.  
From the expressiveness viewpoint, even if we do not have a formal statement, we conjecture that future simple  trigger rules, together with arbitrary trigger-less rules allow for expressiveness strictly in between $\MTL$~\cite{AlurH93} and $\TPTL$~\cite{AlurH94}. 
Note that, by~\cite{kr18}, the TP problem with \emph{simple} trigger rules is already undecidable. 
In this paper, we show that it becomes decidable, although non-primitive recursive hard, under the \emph{future semantics} of the trigger rules. 
Better complexity results can be obtained by restricting also the type of intervals used in the simple trigger rules to compare tokens. In particular, we show that future TP with simple trigger rules without singular intervals%
\footnote{An interval is called \emph{singular} if it has the form $[a,a]$, for $a\in\Nat$.} %
is $\EXPSPACE$-complete. 
The problem is instead $\Psp$-complete if we consider only intervals of the forms $[0,a]$ and $[b,+\infty[$. 
The decidability status of the TP problem with arbitrary trigger rules under the future semantics remains open. 
However, we show that such a problem 
is at least non-primitive recursive even under the assumption that the intervals in the rules have the forms $[0,a]$ and $[b,+\infty[$.


\paragraph{Organization of the paper.} In Section~\ref{sec:preliminaries}, we recall the
TP framework. 
In Section~\ref{sec:DecisionProcedures}, we establish that future TP with simple trigger rules is decidable, and show membership in $\EXPSPACE$ (resp., $\Psp$) under the restriction to non-singular intervals (resp., intervals of the forms $[0,a]$ and $[b,+\infty[$). Matching lower bounds for the last two problems are given in Section~\ref{sec:pspace}.  In Section~\ref{sec:NPRHardness}, we prove  non-primitive recursive hardness of TP under the future semantics of trigger rules. 
Conclusions give an assessment of the work and outline future research themes.
All missing proofs and results can be found in \cite{techrepGand}.
\section{Preliminaries}\label{sec:preliminaries}

Let $\Nat$ be the set of natural numbers, $\RealP$ be the set of non-negative real numbers, and $\Intv$ be the set of intervals in $\RealP$ whose endpoints are in $\Nat\cup\{\infty\}$. Moreover, let us denote by $\Intv_{(0,\infty)}$ the set of intervals $I\in \Intv$ such that
  either $I$ is unbounded, or $I$
  is left-closed with left endpoint $0$. Such intervals $I$ can be replaced by expressions of the form $\sim n$ for some $n\in\Nat$
  and $\sim\in\{<,\leq,>,\geq\}$.
Let $w$ be a finite word over some alphabet. By $|w|$ we denote the length of $w$. For all  $0\leq i<|w|$,  $w(i)$ is
the $i$-th letter of $w$.

\subsection{The TP Problem}\label{sec:timelines}

In this section, we recall 
the TP framework as presented in \cite{MayerOU16,GiganteMCO16}.
In TP, domain knowledge is encoded by a set of state variables, whose behaviour over time is described by transition functions and synchronization rules.


\begin{definition}
  \label{def:statevar}
  A \emph{state variable} $x$ is a triple $x= (V_x,T_x,D_x)$, where $V_x$ is the \emph{finite domain} of the variable $x$, $T_x:V_x\to 2^{V_x}$ is the \emph{value transition function}, which maps
        each $v\in V_x$ to the (possibly empty) set of successor values, and $D_x:V_x\to \Intv$ is the \emph{constraint function} that maps each $v\in V_x$
        to an interval. 
\end{definition}

A \emph{token} for a variable $x$ is a pair $(v,d)$ consisting of a value $v\in V_x$ and a duration $d\in \RealP$
such that $d\in D_x(v)$. Intuitively, a token for $x$ represents an interval of time where the state variable $x$ takes value $v$.
The behavior of the state variable $x$ is specified by means of \emph{timelines} which are non-empty sequences of tokens
$\pi = (v_0,d_0)\ldots  (v_n,d_n)$  consistent with the value transition function $T_x$, that is, such that
$v_{i+1}\in T_x(v_i)$ for all $0\leq i<n$. The \emph{start time} $\start(\pi,i)$ and the \emph{end time} $\Ending(\pi,i)$ of the $i$-th token ($0\leq i\leq n$) of the timeline
$\pi$  are defined as follows: $\Ending(\pi,i)=\displaystyle{\sum_{h=0}^{i}} d_h$ and  $\start(\pi,i)=0$ if $i=0$, and $\start(\pi,i)=\displaystyle{\sum_{h=0}^{i-1}} d_h$ otherwise.
See Figure~\ref{fig:timelineEx} for an example.
\begin{figure}
    \centering
    \includegraphics{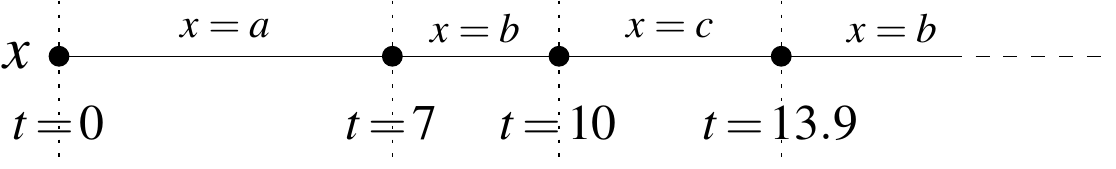}
    \caption{An example of timeline $(a,7)(b,3)(c,3.9)\cdots$ for the state variable $x= (V_x,T_x,D_x)$, where $V_x=\{a,b,c,\ldots\}$, $b\in T_x(a)$, $c\in T_x(b)$, $b\in T_x(c)$\dots and $D_x(a)=[5,8]$, $D_x(b)=[1,4]$, $D_x(c)=[2,\infty[$\dots}
    \label{fig:timelineEx}
\end{figure}




%


Given a finite set $SV$ of state variables, a \emph{multi-timeline} of $SV$ is a mapping $\Pi$ assigning to
each state variable $x\in SV$ a timeline for $x$.
Multi-timelines of $SV$ can be constrained by a set of \emph{synchronization
rules}, which relate tokens, possibly belonging to different timelines, through
temporal constraints on the start/end-times of tokens (time-point constraints) and on the difference
between start/end-times of tokens (interval constraints). The synchronization rules exploit
an alphabet $\Sigma$ of token names to refer to the tokens along a multi-timeline, and are based on the notions of
\emph{atom} and \emph{existential statement}.

%

\begin{definition}
  \label{def:timelines:atom}
  An \emph{atom} is either a clause of the form $o_1\leq^{e_1,e_2}_{I} o_2$
  (\emph{interval atom}), or of the forms $o_1\leq^{e_1}_{I} n$ or  $n\leq^{e_1}_{I}
  o_1$ (\emph{time-point atom}), where $o_1,o_2\in\Sigma$, $I\in\Intv$, $n\in\Nat$, and $e_1,e_2\in\{\start,\Ending\}$.
\end{definition}

An atom $\rho$ is evaluated with respect to a \emph{$\Sigma$-assignment $\lambda_\Pi$ for a given multi-timeline $\Pi$}
which is a mapping assigning to each token name $o\in \Sigma$ a pair $\lambda_\Pi(o)=(\pi,i)$ such that $\pi$ is a timeline of $\Pi$ and $0\leq i<|\pi|$ is a position along $\pi$ (intuitively,
$(\pi,i)$ represents the token of $\Pi$ referenced by the name $o$).
An interval atom $o_1\leq^{e_1,e_2}_{I} o_2$  \emph{is satisfied by  $\lambda_\Pi$} if $e_2(\lambda_\Pi(o_2))-e_1(\lambda_\Pi(o_1))\in I$.
A point atom $o\leq^{e}_{I} n$  (resp., $n\leq^{e}_{I}o$)   \emph{is satisfied by  $\lambda_\Pi$} if $n-e(\lambda_\Pi(o))\in I$ (resp., $e(\lambda_\Pi(o))-n\in I$).


\begin{definition}
 An \emph{existential statement} $\mathcal{E}$ for a finite set $SV$ of state variables is a statement of the form:
\[
\mathcal{E}:=  \exists o_1[x_1=v_1]\cdots \exists o_n[x_n=v_n].\mathcal{C}
\]
  where $\mathcal{C}$ 
  is a conjunction of atoms,
  $o_i\in\Sigma$, $x_i\in SV$, and $v_i\in V_{x_i}$ for each
  $i=1,\ldots,n$. The elements $o_i[x_i=v_i]$ are called
  \emph{quantifiers}. A token name used in $\mathcal{C}$, but not occurring in any
  quantifier, is said to be \emph{free}. Given a $\Sigma$-assignment $\lambda_\Pi$ for a multi-timeline $\Pi$ of $SV$,
  we say that \emph{$\lambda_\Pi$ is consistent with the existential statement $\mathcal{E}$} if for each quantified token name $o_i$,
   $\lambda_\Pi(o_i)=(\pi,h)$ where $\pi=\Pi(x_i)$ and the $h$-th token of $\pi$ has value $v_i$. A multi-timeline $\Pi$ of $SV$ \emph{satisfies} $\mathcal{E}$
   if there exists a $\Sigma$-assignment $\lambda_\Pi$ for $\Pi$ consistent with $\mathcal{E}$ such that each atom in $\mathcal{C}$ is satisfied by
   $\lambda_\Pi$.
\end{definition}

\begin{definition}
  A \emph{synchronization rule} $\mathcal{R}$ for a finite set $SV$ of state variables is a rule of one of the forms
  \[
  o_0[x_0=v_0] \to \mathcal{E}_1\lor \mathcal{E}_2\lor \ldots \lor \mathcal{E}_k, \quad
          \true \to \mathcal{E}_1\lor \mathcal{E}_2\lor \ldots \lor \mathcal{E}_k,
  \]
  where $o_0\in\Sigma$, $x_0\in SV$, $v_0\in V_{x_0}$, and $\mathcal{E}_1, \ldots, \mathcal{E}_k$
  are \emph{existential statements}. 
  In rules of the first
  form (\emph{trigger rules}), the quantifier $o_0[x_0=v_0]$ is called \emph{trigger}, and we require that only $o_0$ may appear free in $\mathcal{E}_i$ (for $i=1,\ldots,n$). In rules of the second form (\emph{trigger-less rules}), we require
  that no token name appears free.
  \newline
  A trigger rule $\mathcal{R}$ is \emph{simple} if for each existential statement $\mathcal{E}$ of $\mathcal{R}$ and each token name $o$ distinct from the trigger, there is at most one \emph{interval atom}
  of $\mathcal{E}$ where $o$ occurs.
\end{definition}

Intuitively, a  trigger $o_0[x_0=v_0]$ acts as a universal quantifier, which
states that \emph{for all} the tokens of the timeline for
the state variable $x_0$, where the variable $x_0$ takes the
value $v_0$, at least one of the existential statements $\mathcal{E}_i$ must be true. 
Trigger-less rules simply assert the
satisfaction of some existential statement. The intuitive meaning of the \emph{simple} trigger rules is that they disallow simultaneous comparisons of multiple time-events
 (start/end times of tokens) with a non-trigger reference time-event. The semantics of synchronization rules is formally defined as follows.

\begin{definition}\label{def:semanticsRules}
Let $\Pi$ be a multi-timeline of a set $SV$ of state variables.
Given a \emph{trigger-less rule} $\mathcal{R}$ of $SV$, \emph{$\Pi$ satisfies $\mathcal{R}$} if $\Pi$ satisfies some existential statement of $\mathcal{R}$.
 Given a \emph{trigger rule} $\mathcal{R}$ of $SV$ with trigger $o_0[x_0=v_0]$, \emph{$\Pi$ satisfies   $\mathcal{R}$} if for every position $i$ of the
 timeline $\Pi(x_0)$ for $x_0$ such that $\Pi(x_0)=(v_0,d)$, there is an existential statement $\mathcal{E}$ of $\mathcal{R}$  and a $\Sigma$-assignment
 $\lambda_\Pi$ for $\Pi$ which is consistent with $\mathcal{E}$ such that $\lambda_\Pi(o_0)= (\Pi(x_0),i)$ and $\lambda_\Pi$ satisfies all the atoms of $\mathcal{E}$.

\end{definition}

In the paper, we focus on a stronger notion of satisfaction of trigger rules, called \emph{ satisfaction under the future semantics}. It requires that all the non-trigger selected tokens
do not start \emph{strictly before} the start-time of the trigger token.

\begin{definition}
  A multi-timeline $\Pi$ of $SV$  \emph{satisfies under the future semantics} a trigger rule $\mathcal{R}= o_0[x_0=v_0] \to \mathcal{E}_1\vee \mathcal{E}_2\vee
  \ldots \vee \mathcal{E}_k$   if $\Pi$ satisfies the trigger rule obtained from
  $\mathcal{R}$ by replacing each existential statement $\mathcal{E}_i=\exists o_1[x_1=v_1]\cdots \exists o_n[x_n=v_n].\mathcal{C}$
  with  $\exists o_1[x_1=v_1]\cdots \exists o_n[x_n=v_n].\mathcal{C}\wedge  \bigwedge_{i=1}^{n} o_0\leq^{\start,\start}_{[0,+\infty[} o_i$.
\end{definition}


A TP domain  $P=(SV,R)$ is specified by a finite set $SV$ of state variables and
a finite set $R$ of synchronization rules modeling their admissible behaviors.
 A \emph{plan of $P$} is a  multi-timeline of $SV$ satisfying all the rules in $R$. A \emph{ future plan of $P$} is defined in a similar way, but we require that the fulfillment of the trigger rules is under the future semantics.
We are interested in the following decision problems:
\begin{inparaenum}[(i)]
  \item \emph{TP problem:} given a TP domain $P=(SV,R)$, is there a plan for $P$?
  \item \emph{Future TP problem:} similar to the previous one, but we  require the existence of a future plan.
\end{inparaenum}

Table~\ref{tab:complex} summarizes all the decidability and complexity results described in the following.
We consider mixes of restrictions of the TP problem involving trigger rules with future semantics, simple trigger rules, and  intervals in atoms of trigger rules which are non-singular or in $\Intv_{(0,\infty)}$.

\begin{table}[t]
    \centering
    \resizebox{\linewidth}{!}{
    \begin{tabular}{r|c|c}
    	& TP problem & Future TP problem \\ 
    	\hline 
    	Unrestricted & Undecidable & (Decidable?) Non-primitive recursive-hard \\ 
    	\hline 
    	Simple trigger rules & Undecidable & Decidable (non-primitive recursive) \\ 
    	\hline 
    	Simple trigger rules, non-singular intervals & ? & $\EXPSPACE$-complete \\ 
    	\hline 
    	Simple trigger rules, intervals in $\Intv_{(0,\infty)}$ & ? & $\Psp$-complete \\ 
    	\hline 
    	Trigger-less rules only & $\NP$-complete & // \\ 
    \end{tabular} }
    \caption{Decidability and complexity of restrictions of the TP problem.}
    \label{tab:complex}
\end{table}


\section{Solving the future TP problem with simple trigger rules}\label{sec:DecisionProcedures}

Recently, we have shown that the TP problem is undecidable even if the trigger rules are assumed to be simple~\cite{kr18}.
In this section, we show that decidability can be recovered assuming that the trigger rules are \emph{simple} and \emph{interpreted under the future
semantics}. Moreover, we establish that under the additional assumption that intervals in trigger rules are non-singular (resp., are in $\Intv_{(0,\infty)}$), the problem is
in $\EXPSPACE$ (resp., $\Psp$). The decidability status of future TP with arbitrary trigger rules remains open.
In Section~\ref{sec:NPRHardness}, we prove that the latter problem is at least non-primitive recursive even if intervals in rules and in the constraint functions of the state variables
are assumed to be in $\Intv_{(0,\infty)}$.

The rest of this section is organized as follows: in Subsection~\ref{sec:TimedAutomata}, we recall the framework
of Timed Automata (\TA)~\cite{ALUR1994183}  and Metric Temporal logic (\MTL)~\cite{Koymans90}. In Subsection~\ref{sec:Reduction}, we reduce the future TP problem
with simple trigger rules to the existential model-checking problem for \TA\ against \MTL\ over \emph{finite timed words}. The latter problem is known to be decidable~\cite{OuaknineW07}.

\subsection{Timed automata and the logic \MTL}\label{sec:TimedAutomata}

Let us recall the notion of timed automaton (\TA)~\cite{ALUR1994183} and the logic \MTL~\cite{Koymans90}.
Let $\Sigma$ be a finite alphabet. A  \emph{timed word} $w$ over  $\Sigma$ is
a \emph{finite}  word $w=(a_0,\tau_0)\cdots (a_n,\tau_n)$ over $\Sigma\times \RealP$ 
($\tau_i$ is the time at which $a_i$ occurs) such that  $\tau_{i}\leq \tau_{i+1}$ for all $0\leq i<n$ (monotonicity).
The timed word $w$ is also denoted by $(\sigma,\tau)$, where $\sigma$ is the finite untimed  word $a_0 \cdots a_n$
and $\tau$ is the sequence of timestamps $\tau_0 \cdots \tau_n$.
A \emph{timed language}  over $\Sigma$ is a set of timed words over $\Sigma$.

\paragraph{Timed Automata (\TA).} Let $C$ be a finite set of clocks. A clock valuation $\val:C\to \RealP$ for $C$ is a mapping
 assigning a non-negative real value to each clock in $C$.
 For $t\in\RealP$ and a reset set $\Res\subseteq C$, $(\val+ t)$ and $\val[\Res]$ denote the valuations over $C$ defined as follows: for all $c\in C$,
 $(\val +t)(c) = \val(c)+t$, and $\val[\Res](c)=0$ if $c\in \Res$ and $\val[\Res](c)=\val(c)$ otherwise.
 A \emph{clock constraint} over $C$ is a conjunction of atomic formulas of the form
$c \in I$ or $c-c'\in I$, where $c,c'\in C$ and
$I\in\Intv$.
For a clock valuation $\val$ and a clock constraint $\theta$, $\val$ satisfies $\theta$, written
$\val\models \theta$, if, for each conjunct $c\in I$ (resp., $c-c'\in I$) of $\theta$, $\val(c)\in I$ (resp., $\val(c)-\val(c')\in I$).
We denote by $\Phi(C)$ the set of clock constraints over $C$.

\begin{definition}
 A  \TA\ over  $\Sigma$ is a tuple
$\Au=\tpl{\Sigma, Q,q_0,C,\Delta,F}$, where $Q$ is a finite
set of (control) states, $q_0\in Q$ is the initial
state, $C$ is the finite set of clocks,
$F\subseteq Q$ is the set of accepting states, and $\Delta \subseteq Q\times \Sigma \times \Phi(C) \times 2^{C} \times Q $ is the transition relation.
The \emph{maximal constant of $\Au$} is the greatest integer occurring as endpoint of some interval in the clock constraints of $\Au$.
\end{definition}

Intuitively, in a \TA\  $\Au$, while transitions are instantaneous, time can elapse in a control
state. The clocks  progress at the same speed  and can
be reset independently of each other when a transition is executed, in such a way that each clock
keeps track of the time elapsed since the last reset. Moreover, clock constraints
are used as guards of transitions to restrict the behavior of the
automaton.

Formally, a configuration of $\Au$ is a pair $(q,\val)$, where $q\in Q$ and $\val$ is a clock valuation for $C$.
A run $r$ of $\Au$ on a timed word $w\!=\!(a_0,\tau_0)\cdots (a_n,\tau_n)$ over $\Sigma$
is a sequence  of configurations
 $r=(q_0,\val_0)\cdots (q_{n+1},\val_{n+1})$ starting at the initial configuration $(q_0,\val_0)$,
with $\val_0(c)\!=\!0$ for all $c\!\in\! C$ and 
\begin{compactitem}
\item for all $0\leq i\leq n$ we have (we let $\tau_{-1}=0$): 
  $(q_{i},a_i,\theta,\Res,q_{i+1})\in\Delta$ for some $\theta\in\Phi(C)$ and reset set $\Res$, $(\val_{i} +\tau_i-\tau_{i-1})\models \theta$ and $\val_{i+1}= (\val_{i} +\tau_i-\tau_{i-1})[\Res]$.
\end{compactitem}
The run $r$ is \emph{accepting} if   $q_{n+1}\in F$.
The \emph{timed language} $\TLang(\Au)$ of $\Au$ is the set of  timed words $w$ over $\Sigma$
such that there is an accepting run of $\Au$ over $w$.

\paragraph{The logic \MTL.} We now recall the framework of Metric Temporal Logic (\MTL)~\cite{Koymans90},  a well-known  timed linear-time temporal logic which extends standard \LTL\ with time
constraints on until modalities.

For a finite set $\Prop$ of atomic propositions, the set of \MTL\ formulas $\varphi$ over $\Prop$ is defined as follows:
\[
\varphi::= \top \mid
p \mid
\varphi \vee \varphi \mid
\neg \varphi      \mid
\varphi \StrictUntil_I\varphi
\]
where $p\in \Prop$, $I\in\Intv$, and
$\StrictUntil_I$  is the standard \emph{strict timed  until} \MTL\ modality. \MTL\ formulas over $\Prop$ are interpreted over  timed words over $2^{\Prop}$.
Given an \MTL\ formula $\varphi$, a  timed word $w=(\sigma,\tau)$ over $2^{\Prop}$, and a position $0\leq  i< |w|$, the satisfaction relation
$(w,i)\models\varphi$, meaning that $\varphi$ holds at position $i$ of $w$, is  defined as follows (we omit the clauses for atomic propositions and Boolean connectives):
\begin{compactitem}
\item $(w,i)\models \varphi_1 \StrictUntil_I\varphi_2
              \Leftrightarrow   \text{there is   }  j>  i \text{ such that } (w,j)\models \varphi_2, \tau_j-\tau_i\in I, \text{ and }
              (w,k)\models \varphi_1 \text{ for all } i<k<j$.
\end{compactitem}
A \emph{model of $\varphi$} is a  timed word $w$ over $2^{\Prop}$ such that $(w,0)\models \varphi$. The \emph{timed language} $\TLang(\varphi)$ of $\varphi$ is the set of  models of $\varphi$.
The \emph{existential model-checking problem for \TA\ against \MTL} is the problem of checking for a \TA\ $\Au$ over $2^{\Prop}$ and an \MTL\ formula $\varphi$ over $\Prop$ whether
$\TLang(\Au)\cap \TLang(\varphi)\neq \emptyset$.

In \MTL, we use standard shortcuts: $\Eventually_I \varphi$ stands for $\varphi \vee (\true 
\StrictUntil_I \varphi)$ (\emph{timed eventually}), and $\Always_I \varphi$ stands for $\neg \Eventually_I  \neg\varphi$ (\emph{timed always}).
 We also consider two fragments of \MTL, namely, \MITL\ (Metric Interval Temporal Logic)
and $\MITLR$~\cite{Alur:1996}: \MITL\ is obtained by allowing only non-singular intervals in $\Intv$, while $\MITLR$ is the fragment of \MITL\ obtained by allowing only intervals
in $\IntvR$. The \emph{maximal constant} of an \MTL\ formula $\varphi$ is the greatest integer occurring as an endpoint of some interval of (the occurrences of) the $\StrictUntil_I$ modality in $\varphi$.

\subsection{Reduction to existential model checking of \TA\ against \MTL}\label{sec:Reduction}

In this section, we solve the future TP problem with simple trigger rules by an exponential-time reduction to the existential model-checking problem for \TA\ against \MTL. 

In the following, we fix an instance $P=(SV,R)$ of the problem such that the trigger rules in $R$ are simple. The \emph{maximal constant} of $P$, denoted by $K_P$, is the greatest integer occurring in the atoms of $R$ and in the constraint  functions of the variables in $SV$.

The proposed reduction consists of three steps:
\begin{inparaenum}[(i)]
  \item first, we define an encoding of the multi-timelines of $SV$ by means of timed words over $2^{\Prop}$ for a suitable finite set $\Prop$ of atomic propositions,
  and show how to construct a \TA\ $\Au_{SV}$ over $2^{\Prop}$ accepting such encodings;
  \item next, we build an \MTL\ formula $\varphi_{\forall}$ over $\Prop$ such that for each multi-timeline $\Pi$ of $SV$ and encoding $w_\Pi$ of $\Pi$, $w_\Pi$ is a model of $\varphi_\forall$
  if and only if $\Pi$ satisfies all the trigger rules in $R$ under the future semantics;
  \item finally, we construct a \TA\ $\Au_{\exists}$ over $2^{\Prop}$ such that for each multi-timeline $\Pi$ of $SV$ and encoding $w_\Pi$ of $\Pi$, $w_\Pi$ is accepted by $\Au_{\exists}$
  if and only if $\Pi$ satisfies all the trigger-less rules in $R$.
\end{inparaenum}
Hence, there is a future plan of $(SV,R)$  if and only if  $\TLang(\Au_{SV})\cap \TLang(\Au_\exists )\cap \TLang(\varphi_\forall)\neq \emptyset$. 

For each $x\in SV$, let $x=\tpl{V_x,T_x,D_x}$.
Given an interval $I\in\Intv$ and a natural number $n\in \Nat$, $n+I$ (resp., $n-I$) denotes   the set of non-negative real numbers
$\tau\in\RealP$ such that $\tau-n\in I$ (resp., $n-\tau \in I$). Note that $n+I$ (resp., $n-I$) is a (possibly empty) interval in $\Intv$ whose endpoints can be trivially calculated.
For an atom $\rho$ in $R$ involving a time constant (\emph{time-point atom}), let $I(\rho)$ be the interval in $\Intv$ defined as follows:
\begin{compactitem}
  \item if $\rho$ is of the form  $o\leq^{e}_{I} n$ (resp., $n\leq^{e}_{I} o$), then $I(\rho):= n-I$ (resp., $I(\rho)= n+I$).
\end{compactitem}
We define $\Intv_R:=\{J\in\Intv \mid J=I(\rho) \text{ for some time-point atom $\rho$ occurring in a trigger rule of } R\}$.

\paragraph{Encodings of multi-timelines of $SV$.} We assume that for distinct state variables $x$ and $x'$, the sets $V_x$ and  $V_{x'}$
are disjunct. We exploit the following set $\Prop$ of propositions to encode multi-timelines of $SV$: 
\[
\Prop := \displaystyle{\bigcup_{x\in SV}}\Main_x \cup \Deriv ,
\]
\[
\Main_x := (\{\Beg_x \}\cup V_x) \times V_x   \cup   V_x \times \{\End_x\}, \quad \quad \Deriv := \Intv_R \cup \{p_>\} \cup \bigcup_{x\in SV}\bigcup_{v\in V_x}\{\Past_v^{\start},\Past_v^{\Ending}\}.
\]
Intuitively, we use the propositions in $\Main_x$ to encode a token along a timeline for $x$. The propositions in $\Deriv$, as explained below, represent
enrichments of the encoding, used for translating simple trigger rules in \MTL\ formulas under the future semantics.
 The tags $\Beg_x$ and $\End_x$ in $\Main_x$ are used to mark the start and the end of a timeline  for $x$. In particular, a token $tk$ with value $v$ along a timeline for
 $x$ is encoded by two events: the \emph{start-event} (occurring at the start time of $tk$) and
 the \emph{end-event} (occurring at the end time of $tk$). The start-event of $tk$ is specified by a main proposition of the form
 $(v_p,v)$, where either $v_p=\Beg_x$ ($tk$ is the first token of the timeline) or $v_p$ is the value of the $x$-token
preceding $tk$. The end-event of $tk$ is instead specified by a main proposition of the form
 $(v,v_s)$, where either $v_s=\End_x$ ($tk$ is the last token of the timeline) or $v_s$ is the value of the $x$-token
following $tk$. Now, we explain the meaning of the propositions in $\Deriv$.
Elements in  $\Intv_R$ reflect the semantics of
the time-point atoms in the trigger rules of $R$: for each $I\in \Intv_R$, $I$ holds at the current position if the current timestamp $\tau$ satisfies
$\tau\in I$.  The tag $p_>$ keeps track of whether the current timestamp is strictly greater than the previous one.
Finally,  the propositions in $\bigcup_{x\in SV}\bigcup_{v\in V_x}\{\Past_v^{\start},\Past_v^{\Ending}\}$ keep track of past token events occurring at timestamps \emph{coinciding}
with the current timestamp. We first define the encoding of timelines for $x\in SV$.

A \emph{code for a timeline for $x$} is
 a timed word $w$ over $2^{\Main_x \cup \Deriv}$ of the form
 \[
 w =(\{(\Beg_x,v_0)\}\cup S_0,\tau_0),(\{(v_0,v_1)\}\cup S_1,\tau_1)\ldots (\{(v_n,\End_x)\}\cup S_{n+1},\tau_{n+1})
 \]
 where, for all $0\leq i\leq n+1$, $S_i\subseteq \Deriv$, and 
 \begin{inparaenum}[(i)]
  \item $v_{i+1}\in T_x(v_i)$ if $i<n$;
 \item $\tau_0=0$ and $\tau_{i+1}-\tau_i \in D_x(v_i)$ if $i\leq n$;
  \item   $S_i\cap \Intv_R$ is the set of intervals $I\in\Intv_R$
  such that $\tau_i\in I$, and $p_>\in S_i$ iff either $i=0$ or $\tau_i>\tau_{i-1}$;
  \item for all $v\in V_x$, $\Past^{\start}_v\in S_i$ (resp., $\Past^{\Ending}_v\in S_i$) iff there is $0\leq h<i$ such that $\tau_h=\tau_i$ and $v= v_h$ (resp., $\tau_h=\tau_i$, $v=v_{h-1}$ and $h>0$).
\end{inparaenum}
Note that the length of $w$ is at least $2$. The timed word $w$ encodes the timeline for $x$ of length $n+1$
given by $\pi=(v_0,\tau_1) (v_1,\tau_2-\tau_1)\cdots (v_n,\tau_{n+1}-\tau_n)$. Note that in the encoding, $\tau_i$ and $\tau_{i+1}$ represent the start time and the end time
of the $i$-th token of the timeline ($0\leq i\leq n$).

Next, we define the encoding of a multi-timeline for $SV$.  For a set $P\subseteq \Prop$ and $x\in SV$, let $P[x]:= P\setminus \bigcup_{y\in SV\setminus \{x\}}\Main_y$.
A \emph{code for a multi-timeline for $SV$} is
 a timed word $w$ over $2^{\Prop}$ of the form
$
 w =(P_0,\tau_0)\cdots (P_n,\tau_n)
$
 such that  the following conditions hold:
 \begin{inparaenum}[(i)]
  \item  for all $x\in SV$, the timed word obtained from $(P_0[x],\tau_0)\cdots (P_n[x],\tau_n)$ by removing
  the pairs $(P_i[x],\tau_i)$ such that $P_i[x]\cap \Main_x=\emptyset$ is a code of a timeline for $x$;
    \item $P_0[x]\cap \Main_x\neq \emptyset$ for all $x\in SV$ (initialization).
\end{inparaenum}

One can easily construct a \TA\ $\Au_{SV}$ over $2^{\Prop}$ accepting the encodings of the multi-timelines of $SV$.
In particular, the \TA\ $\Au_{SV}$ uses a clock $c_x$ for each state variable  $x$ for checking the time constraints on the duration of the tokens for $x$. Two additional clocks $c_>$
and $c_{glob}$ are exploited for capturing the meaning of the proposition $p_>$ and of the propositions in $\Intv_R$ (in particular, $c_{glob}$ is a clock which measures the current time and is never reset).
Hence, we obtain the following result (for details, see~\cite{techrepGand}).

\newcounter{prop-AtutomataForMultiTimeline}
\setcounter{prop-AtutomataForMultiTimeline}{\value{proposition}}

 \begin{proposition}\label{prop:AtutomataForMultiTimeline} One can construct a \TA\ $\Au_{SV}$ over $2^{\Prop}$, with $2^{O(\sum_{x\in SV}|V_x|)}$ states, $|SV|+2$ clocks, and
 maximal constant $O(K_P)$, such that $\TLang(\Au_{SV})$ is the set of codes for the multi-timelines of $SV$.
 \end{proposition}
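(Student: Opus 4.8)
The plan is to construct $\Au_{SV}$ as a single timed automaton that scans a candidate timed word $w=(P_0,\tau_0)\cdots(P_n,\tau_n)$ over $2^{\Prop}$ and verifies, position by position, each of the defining conditions of a code for a multi-timeline of $SV$. All the information to be carried from one position to the next is local and finitely bounded, so a finite control together with $|SV|+2$ clocks will do. A control state of $\Au_{SV}$ stores: (a) for every $x\in SV$, an element of $V_x\cup\{\mathsf{done}_x\}$, namely the value of the last $x$-token whose end-event has not yet been read, or $\mathsf{done}_x$ once the $\End_x$-event of the timeline for $x$ has been read; and (b) two sets $M^{\start},M^{\Ending}\subseteq\bigcup_{x\in SV}V_x$, the \emph{block memory}, recording which token values already have, respectively, a start-event or an end-event at the current timestamp strictly before the current position. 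A dedicated initial state means ``nothing read yet''. The clocks are: one clock $c_x$ per $x\in SV$, reset at every $x$-event and tested through guards $c_x\in D_x(v)$ to check the duration of the just-completed $v$-token; a clock $c_>$, reset at every step, used to detect whether time has elapsed since the previous position; and a clock $c_{glob}$, never reset, holding the current timestamp.

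On reading a letter $(P,\tau)$, $\Au_{SV}$ carries out four kinds of checks. \emph{Main propositions:} for every $x$ it requires $|P\cap\Main_x|\le 1$; if $P\cap\Main_x=\{(a,b)\}$ then at the first position $a$ must be $\Beg_x$, while at later positions $a$ must equal the stored value $v\in V_x$ of $x$ (so a timeline cannot continue past its $\End_x$-event); when $a\in V_x$ it also checks $c_x\in D_x(a)$; in all cases it resets $c_x$ and updates the stored value of $x$ to $b$ if $b\in V_x$, or to $\mathsf{done}_x$ if $b=\End_x$; and when $a,b\in V_x$ it further requires $b\in T_x(a)$. At the first position it moreover requires $\tau=0$ and $P\cap\Main_x\neq\emptyset$ for all $x$ (initialization). \emph{The proposition $p_>$:} the transition branches on the guard $c_>\in\,]0,+\infty[$ versus $c_>\in[0,0]$, requires $p_>\in P$ iff time has elapsed or this is the first position, and resets $c_>$. \emph{The propositions in $\Intv_R$:} let $E\subseteq\Nat$ be the finite endpoints of the intervals of $\Intv_R$; the points of $E$ partition $\RealP$ into finitely many regions, each of which is a single interval $J\in\Intv$ contained in a fixed subset $\Intv_R(J)\subseteq\Intv_R$; the transition guesses the region $J$ with $\tau\in J$, adds the atomic guard $c_{glob}\in J$, and requires $P\cap\Intv_R=\Intv_R(J)$. \emph{The $\Past$ propositions:} using that the $V_x$'s are pairwise disjoint, if time has not elapsed the transition requires $\{v\mid\Past^{\start}_v\in P\}=M^{\start}$ and $\{v\mid\Past^{\Ending}_v\in P\}=M^{\Ending}$ and then updates $M^{\start}:=M^{\start}\cup\{b\mid (a,b)\in P,\, b\in\bigcup_xV_x\}$ and $M^{\Ending}:=M^{\Ending}\cup\{a\mid (a,b)\in P,\, a\in\bigcup_xV_x\}$; if time has elapsed (or this is the first position) it requires $P$ to carry no $\Past$-proposition and resets the block memory to these two unions computed from $P$ alone. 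A control state is accepting iff the stored value of every $x$ equals $\mathsf{done}_x$.

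Correctness follows by a routine induction on the position of $w$: an accepting run of $\Au_{SV}$ on $w$ exists iff, for each $x$, the timed word obtained from $w$ by keeping only the positions carrying a $\Main_x$-proposition and deleting from them the main propositions of the other variables is a code of a timeline for $x$ (the main-proposition and $c_x$-guard checks realize items (i)--(ii) of that notion, i.e.\ the $T_x$-chain, the $\Beg_x/\End_x$ form, and $\tau_{i+1}-\tau_i\in D_x(v_i)$), the initialization condition holds, and the $\Deriv$-parts of all $P_i$ satisfy items (iii)--(iv) (the region machinery with $c_{glob}$ for $\Intv_R$, $c_>$ for $p_>$, the block-memory invariant for the $\Past$ propositions). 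For the resource bounds, the number of control states is at most $\bigl(\prod_{x\in SV}(|V_x|+1)\bigr)\cdot 2^{2\sum_{x\in SV}|V_x|}=2^{O(\sum_{x\in SV}|V_x|)}$; the clock set $\{c_x\mid x\in SV\}\cup\{c_>,c_{glob}\}$ has size $|SV|+2$; and every integer occurring in a clock constraint is either an endpoint of some $D_x(v)$ (hence at most $K_P$) or an endpoint of a region, i.e.\ of some interval of $\Intv_R$, which has the form $n+I$ or $n-I$ with $n\le K_P$ and the endpoints of $I$ at most $K_P$, hence is at most $2K_P=O(K_P)$.

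The bookkeeping for the main propositions and the duration clocks is routine; the point that really needs care is meeting both budgets at once. Since $|\Intv_R|$ is bounded only by $|R|$, and not by $\sum_x|V_x|$, the information ``which intervals of $\Intv_R$ contain the current timestamp'' cannot be stored in the control state, and this is exactly what the region decomposition avoids: it pushes that information into the (unboundedly many) transitions, multiplying the transition relation by a factor $O(|\Intv_R|)$ and adding nothing to the state count; likewise the value-chain of each timeline must be tracked with a single clock $c_x$ and only $|V_x|+1$ control states per variable. The most delicate invariant is the one for the $\Past$ propositions: one has to verify that the block memory, updated as above and flushed exactly when a strict time increase is witnessed, coincides at every position with the set of token values whose start-events (resp.\ end-events) occur at the current timestamp strictly earlier in $w$, which forces one to handle consistently the interplay of $c_>$, the first-position special case, and the fact that the end-event of a token is simultaneously the start-event of the next.
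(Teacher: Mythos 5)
Your construction is correct and follows essentially the same route as the paper, which only sketches it (one duration clock $c_x$ per state variable, a clock $c_>$ for $p_>$, a never-reset clock $c_{glob}$ for the propositions in $\Intv_R$, and finite control for the value chains) and defers the details to the technical report; your region trick for keeping $\Intv_R$ out of the control state and your block memory for the $\Past$ propositions are exactly the kind of bookkeeping needed to meet the stated state, clock, and constant bounds.
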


 \paragraph{Encodings of simple trigger rules by \MTL\ formulas.} We now construct an \MTL\ formula $\varphi_{\forall}$ over $\Prop$ capturing the trigger rules in $R$, which, by hypothesis, are simple,
  under the future semantics.

 \begin{proposition}\label{prop:MTLTriggerRules} If the trigger rules in $R$ are simple, then one can construct in linear-time an \MTL\ formula $\varphi_{\forall}$, with maximal constant $O(K_P)$,
  such that for each multi-timeline $\Pi$ of $SV$ and encoding $w_\Pi$ of $\Pi$, $w_\Pi$ is a model of $\varphi_\forall$
  iff $\Pi$ satisfies all the trigger rules in $R$ under the future semantics. Moreover,
  $\varphi_\forall$ is an \MITL\ formula (resp., $\MITLR$ formula) if the intervals in the trigger rules are non-singular (resp., belong to $\IntvR$). Finally,
  $\varphi_\forall$ has $O(\sum_{x\in SV}|V_x| +N_a)$ distinct subformulas, where $N_a$ is the overall number of atoms in the
trigger rules in $R$.
 \end{proposition}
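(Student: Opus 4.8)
The plan is to take for $\varphi_\forall$ the conjunction $\bigwedge_{\mathcal{R}} \psi_{\mathcal{R}}$ over the trigger rules $\mathcal{R}=o_0[x_0=v_0]\to \mathcal{E}_1\vee\cdots\vee\mathcal{E}_k$ of $R$, where $\psi_{\mathcal{R}}:=\Always_{[0,+\infty[}\big(\mathit{trig}_{x_0,v_0}\to(\chi_{\mathcal{E}_1}\vee\cdots\vee\chi_{\mathcal{E}_k})\big)$ and $\mathit{trig}_{x_0,v_0}$ is the disjunction of the main propositions $(v_p,v_0)$ of $\Main_{x_0}$ over all admissible predecessor tags $v_p\in\{\Beg_{x_0}\}\cup V_{x_0}$. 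By the definition of the encoding, a word position of $w_\Pi$ carries $\mathit{trig}_{x_0,v_0}$ exactly when it is the start-event of a token of $\Pi(x_0)$ with value $v_0$; hence these positions are in bijection with the positions of $\Pi(x_0)$ selected by the trigger, and since $\Always_{[0,+\infty[}$ evaluated at position $0$ ranges over all positions, $\psi_{\mathcal{R}}$ faithfully mirrors the universally quantified semantics of $\mathcal{R}$. The whole problem therefore reduces to producing, for each existential statement $\mathcal{E}=\exists o_1[x_1=v_1]\cdots\exists o_n[x_n=v_n].\mathcal{C}$ of $\mathcal{R}$, an \MTL\ formula $\chi_{\mathcal{E}}$ such that, evaluated at the start-event $i_0$ of a trigger token, it holds iff there is a $\Sigma$-assignment consistent with $\mathcal{E}$ that maps $o_0$ to that token, satisfies every atom of $\mathcal{C}$, and also meets the future-semantics constraints $\start(o_0)\leq\start(o_j)$ for all $j$.

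The construction of $\chi_{\mathcal{E}}$ is the heart of the proof and is precisely where \emph{simplicity} enters. Consider the \emph{interval-atom graph} of $\mathcal{E}$, whose vertices are $o_0,\dots,o_n$ and whose edges are the interval atoms of $\mathcal{C}$; simplicity says every non-trigger vertex has degree at most one, so each connected component is either the trigger together with the non-trigger names attached to it by a single interval atom, or a pair of non-trigger names joined by one interval atom, or an isolated non-trigger name. We take $\chi_{\mathcal{E}}$ to be a conjunction of one formula per component, each evaluated at $i_0$. An isolated $o_j$ contributes a formula asserting the existence of a $v_j$-token of $x_j$ that starts at or after $\tau_{i_0}$ (captured by a timed eventuality towards the future, by the current position, or --- for the simultaneous-but-earlier-in-the-word case --- by reading the proposition $\Past^{\start}_{v_j}$ off $i_0$), and whose relevant events carry the $\Intv_R$-propositions dictated by $o_j$'s time-point atoms (each such atom reduces, through the quantities $n+I$, $n-I$, to membership of the token's start- or end-timestamp in a fixed interval of $\Intv_R$). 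For a component consisting of the trigger and a non-trigger $o_j$ joined by an interval atom, the $o_0$-side time-event is either the start-event of $o_0$, whose timestamp $\tau_{i_0}$ is known locally, or the end-event of $o_0$, which is the first $\Main_{x_0}$-event strictly after $i_0$ and is reached by an until that forbids intermediate $\Main_{x_0}$-events; from the relevant $o_0$-event we then assert, via a timed eventuality whose interval is exactly that of the atom (intersected with the future half-line when the future semantics demands it), the existence of the matching $o_j$-event, again conjoined with $o_j$'s time-point requirements. A pair of non-trigger names is handled by a case split on which of the two relevant events occurs first in the word (and whether they are simultaneous), combining a forward scan for the earlier one with a timed eventuality, carrying the atom's interval or its mirror, for the later one, and enforcing the future lower bound on both start-times as above. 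The point is that, because each non-trigger name occurs in at most one interval atom, no non-trigger timestamp ever has to be ``remembered'' for a later comparison; this is exactly what permits the translation to stay inside \MTL\ rather than \TPTL. Both directions of correctness are then established by the token-to-event correspondence of the encoding, using conditions (iii)--(iv) on the auxiliary propositions to transfer time-point atoms and simultaneity, and observing that the construction is insensitive to the word-order chosen for simultaneous events because that information has been pre-compiled into the $\Past^{\start}$, $\Past^{\Ending}$ and $p_>$ propositions --- the latter also being what lets us say ``within the current timestamp'' without a singular interval or a past modality.

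All remaining claims are bookkeeping. The only intervals appearing on $\StrictUntil$-modalities of $\varphi_\forall$ are $[0,+\infty[$ and the intervals of the atoms of the trigger rules (their reflections $n\pm I$ occur only as \emph{propositions} in $\Intv_R$, not as modality intervals), so the maximal constant of $\varphi_\forall$ is $O(K_P)$; if all those atom intervals are non-singular the same holds of every modality interval, so $\varphi_\forall$ is an \MITL\ formula, and if they all lie in $\IntvR$ the finitely many degenerate cases collapse to $\bot$ or to $[0,+\infty[$ and $\varphi_\forall$ is an $\MITLR$ formula. The bounds on the number of distinct subformulas and on the running time follow by direct inspection of the construction: the per-variable building blocks ($\mathit{trig}_{x,v}$, event detectors, the $\Intv_R$- and $\Past$-propositions) are shared across all rules, and each atom of $R$ adds only a bounded number of new subformulas, whence $O(\sum_{x\in SV}|V_x|+N_a)$ distinct subformulas and a linear-time construction.

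I expect the decisive difficulty to be exactly the construction of $\chi_{\mathcal{E}}$ for the components that are not stars around the trigger --- interval atoms between two non-trigger names, and atoms whose trigger-side time-event is the end-event of $o_0$ --- where a single left-to-right pass must fix the relative position of two token-events with neither a backward modality nor a clock to freeze, and where the simultaneous-timestamp corner cases must be routed through the $\Past^{\start}$, $\Past^{\Ending}$ and $p_>$ propositions of the encoding. It is here that the simplicity hypothesis is indispensable, and its failure is what makes the unrestricted problem behave like \TPTL.
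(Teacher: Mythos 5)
Your proposal is correct and follows essentially the same route as the paper: a conjunct per trigger rule of the form $\Always_{\geq 0}(\psi(\start,v_t)\to\bigvee_i\Phi_{\mathcal{E}_i})$, with time-point atoms pre-compiled into the $\Intv_R$ propositions and each interval atom translated independently by a case analysis (trigger-to-nontrigger, nontrigger-to-trigger, nontrigger-to-nontrigger) that routes simultaneous-timestamp cases through the $\Past^{\start}_v,\Past^{\Ending}_v$ and $p_>$ propositions — exactly the paper's $\chi_\rho$ construction, with your ``one formula per connected component of the interval-atom graph'' being just a repackaging of the paper's conjunction $\bigwedge_{\rho\in\mathbf{A}}\chi_\rho$, legitimate for the same reason (simplicity gives each non-trigger name degree at most one). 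The only cosmetic difference is that you explicitly treat isolated quantified names, which the paper leaves implicit.
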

\begin{proof} We first introduce some auxiliary propositional (Boolean) formulas over $\Prop$. Let $x\in SV$ and $v\in V_x$. We denote by
$\psi(\start,v)$ and $\psi(\Ending,v)$ the two propositional formulas over $\Main_x$ defined as follows:
\[
\psi(\start,v):= (\Beg_x,v)\vee \displaystyle{\bigvee_{u\in V_x}}(u,v)\quad\quad \psi(\Ending,v):= (v,\End_x)\vee \displaystyle{\bigvee_{u\in V_x}}(v,u)
\]
Intuitively, $\psi(\start,v)$ (resp., $\psi(\Ending,v)$) states that a start-event (resp., end-event) for a token for $x$ with value $v$ occurs at the current time.
We also exploit the formula $\psi_{\neg x}:= \neg \bigvee_{m\in \Main_x} m$ asserting that no event for a token for $x$ occurs at the current time.
Additionally, for an \MTL\ formula $\theta$, we exploit the \MTL\ formula $\EqTime(\theta): = \theta \vee [\neg p_> \StrictUntil_{\geq 0}(\neg p_> \wedge \theta)]$ which is satisfied
by a code of a multi-timeline of $SV$ at the current time, if $\theta$ eventually holds at a position whose timestamp coincides with the current timestamp.

The \MTL\ formula $\varphi_{\forall}$ has a conjunct   
 $\varphi_{\mathcal{R}}$ for each trigger rule $\mathcal{R}$. 
 Let $\mathcal{R}$ be a trigger rule of the form
 $o_t[x_t =v_t] \to \mathcal{E}_1\vee \mathcal{E}_2\vee \ldots \vee \mathcal{E}_k$. 
 Then, the \MTL\ formula  $\varphi_{\mathcal{R}}$ is given by 
 \[
\varphi_{\mathcal{R}}:= \Always_{\geq 0} \big(\psi(\start,v_t) \rightarrow \displaystyle{\bigvee_{i=1}^{k}}\Phi_{\mathcal{E}_i}\big)
 \]
where the \MTL\ formula   $\Phi_{\mathcal{E}_i}$, with $1\leq i\leq k$, ensures the fulfillment of the existential statement $\mathcal{E}_i$
of the trigger rule $\mathcal{R}$ under the future semantics. Let $\mathcal{E}\in \{\mathcal{E}_1,\ldots,\mathcal{E}_k\}$, $O$ be the set of token names existentially quantified
by $\mathcal{E}$, $\mathbf{A}$ be  the set of \emph{interval} atoms in $\mathcal{E}$, and, for each $o\in O$, $v(o)$  be the value of the token referenced by $o$ in the associated quantifier. 
In the construction of $\Phi_{\mathcal{E}}$, we crucially exploit the assumption that  $\mathcal{R}$ is simple: 
for each token name $o\in O$, there is at most one atom in $\mathbf{A}$ where $o$ occurs.

For each token name $o\in \{o_t\}\cup O$, 
we denote by $\Intv_o^{\start}$ (resp., $\Intv_o^{\Ending}$) the set of intervals $J\in\Intv$ such that $J=I(\rho)$ for some time-point atom $\rho$ occurring in  $\mathcal{E}$, which imposes a time constraint on the start time (resp., end time) of the token referenced by $o$. Note that $\Intv_o^{\start},\Intv_o^{\Ending}\subseteq \Prop$, and we exploit the propositional formulas $\xi^{\start}_o = \bigwedge_{I\in \Intv^{\start}_o}I$ and $\xi^{\Ending}_o = \bigwedge_{I\in \Intv^{\Ending}_o}I$  to ensure the fulfillment of the time constraints imposed by the
time-point atoms associated with the token $o$.  
The \MTL\ formula $\Phi_{\mathcal{E}}$ is given by:
\[
\Phi_{\mathcal{E}}:=\xi^{\start}_{o_t} \wedge [\psi_{\neg x_t}\StrictUntil_{\geq 0}(\psi(\Ending,v_t)\wedge \xi^{\Ending}_{o_t})]\wedge \displaystyle{\bigwedge_{\rho\in \mathbf{A}}} \chi_\rho,
\]
where, for each atom $\rho\in \mathbf{A}$, the formula $\chi_\rho$ captures the future semantics of $\rho$. 

The construction of $\chi_\rho$ depends on the form of $\rho$. We distinguishes four cases.
\begin{compactitem}
   \item $\rho = o \leq_I^{e_1,e_2} o_t$  and $o\neq o_t$. We assume $0\in I$ (the other case being simpler). First, assume that $e_2=\start$. Under the future semantics,
  $\rho$ holds iff the start time of the trigger token $o_t$ coincides with the $e_1$-time of token $o$. Hence, in this case ($e_2=\start$), $\chi_\rho$ is given by:
  \[
  \chi_\rho := \xi_o^{e_1}\wedge  \bigl(\Past_{v(o)}^{e_1}\vee \EqTime(\psi(e_1,v(0)))\bigr).
  \]
  If instead $e_2 = \Ending$, then $\chi_\rho$ is defined as follows:
 \begin{multline*}
  \chi_\rho :=  [\psi_{\neg x_t}\StrictUntil_{\geq 0}\{\xi_o^{e_1}\wedge \psi(e_1,v(0))\wedge \psi_{\neg x_t} \wedge (\psi_{\neg x_t}\StrictUntil_I \psi(\Ending,v_t))\}]   \vee     [(\psi(e_1,v(0))\vee \Past_{v(o)}^{e_1}) \wedge \xi_o^{e_1}]   \vee 
  \\
       [\psi_{\neg x_t}\StrictUntil_{\geq 0}\{\psi(\Ending,v_t)  \wedge \EqTime(\psi(e_1,v(0))\wedge\xi_o^{e_1})\}]
\end{multline*}
 The first disjunct considers the case where the $e_1$-event of token $o$ occurs strictly between the start-event and the end-event of the trigger token $o_t$ (along the encoding of a multi-timeline of $SV$).
 The second considers the case where the  $e_1$-event of token $o$ precedes the start-event of the trigger token: thus, under the future semantics, it holds that
 the $e_1$-time of token $o$ coincides with the start time of the trigger token. Finally, the third disjunct considers the case where the $e_1$-event of token $o$ follows the
 end-event of the trigger token (in this case, the related timestamps have to coincide).
 \item $\rho = o_t \leq_I^{e_1,e_2} o$ and $o\neq o_t$. We assume $e_1=\Ending$  and $0\in I$ (the other cases being simpler). Then,
  \[
  \chi_\rho = [\psi_{\neg x_t}\StrictUntil_{\geq 0}(\psi(\Ending,u_t)\wedge \Eventually_I (\psi(e_2, v(o))\wedge \xi_o^{e_2}) ) ]\vee
  [\psi_{\neg x_t}\StrictUntil_{\geq 0}(\psi(\Ending,u_t)\wedge \Past_{v(o)}^{e_2} \wedge \xi_o^{e_2})],
  \]
  where the second disjunct captures the situation where the $e_2$-time  of $o$ coincides with the end time of the trigger token $o_t$, but the $e_2$-event of $o$ occurs before the end-event of the trigger token.
  \item $\rho = o_t \leq_I^{e_1,e_2} o_t$. This case is straightforward and we omit the details.
  \item $\rho = o_1 \leq_I^{e_1,e_2} o_2$, $o_1\neq o_t$ and $o_2 \neq o_t$. We assume $o_1\neq o_2$  and $0\in I$ (the other cases are simpler).  Then,
\begin{multline*}
  \chi_\rho \!:=\!  [\Past_{v(o_1)}^{e_1} \!\wedge \xi_o^{e_1} \!\!\wedge\!\! \Eventually_I (\psi(e_2,v(o_2)) \!\wedge\! \xi_o^{e_2}) ]  \! \vee\!    [\Eventually_{\geq 0}\{\psi(e_1,v(o_1)) \wedge \xi_o^{e_1} \wedge \Eventually_I (\psi(e_2,v(o_2)) \wedge \xi_o^{e_2})\} ]   \vee 
\\
    [\Past_{v(o_1)}^{e_1} \wedge \xi_o^{e_1} \wedge \Past_{v(o_2)}^{e_2} \wedge \xi_o^{e_2}] \vee [\Past_{v(o_2)}^{e_2} \wedge \xi_o^{e_2} \wedge \EqTime(\psi(e_1,v(o_1)) \wedge\xi_o^{e_1})]   \vee 
  \\
  [\Eventually_{\geq 0}\{\psi(e_2,v(o_2)) \wedge \xi_o^{e_2} \wedge \EqTime(\psi(e_1,v(o_1)) \wedge\xi_o^{e_1})\}]
\end{multline*}
The first two disjuncts handle the cases where (under the future semantics) the $e_1$-event of token $o_1$ precedes the $e_2$-event of token $o_2$, while the last three disjuncts consider the dual situation.
In the latter case, the $e_1$-time of token $o_1$ and the $e_2$-time of token $o_2$ are equal.
\end{compactitem}
Note that the \MTL\ formula $\varphi_\forall$ is an \MITL\ formula (resp., $\MITLR$ formula) if the intervals in the trigger rules are non-singular (resp., belong to $\IntvR$). This concludes the proof. 
 \end{proof}

\paragraph{Encoding of trigger-less rules by \TA.} 
We note that an existential statement in a trigger-less rule requires
the existence of an \emph{a priori bounded number} of temporal events  satisfying mutual temporal relations. Hence, one can easily construct a \TA\
which guesses such a chain of events and checks the temporal relations by clock constraints and clock resets.
Thus, by the well-known effective closure of \TA\ under language union and intersection~\cite{ALUR1994183}, we obtain the following result (for details, see~\cite{techrepGand}).

\newcounter{prop-TATriggerLessRules}
\setcounter{prop-TATriggerLessRules}{\value{proposition}}

 \begin{proposition}\label{prop:TATriggerLessRules} One can construct in exponential time a \TA\ $\Au_{_\exists}$ over $2^{\Prop}$  such that, for each multi-timeline $\Pi$ of $SV$ and encoding $w_\Pi$ of $\Pi$, $w_\Pi$ is accepted by $\Au_{\exists}$
  iff $\Pi$ satisfies all the  trigger-less  rules in $R$. Moreover, $\Au_{_\exists}$ has  $2^{O(N_q)}$ states, $O(N_q)$ clocks, and maximal constant $O(K_P)$, where
  $N_q$  is the overall number of quantifiers   in the trigger-less  rules of $R$.
 \end{proposition}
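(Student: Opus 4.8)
The plan is to build, for each trigger-less rule of $R$ separately, a \TA\ recognizing exactly the codes of multi-timelines of $SV$ satisfying that rule, and then to combine these automata by the effective closure of \TA\ under intersection; on timed words that are not codes the behaviour of $\Au_{\exists}$ is irrelevant, since in the last step it is intersected with $\Au_{SV}$ of Proposition~\ref{prop:AtutomataForMultiTimeline}. Fix a trigger-less rule $\mathcal{R} = \true \to \mathcal{E}_1 \vee \cdots \vee \mathcal{E}_k$. A multi-timeline satisfies $\mathcal{R}$ iff it satisfies some $\mathcal{E}_i$, so it suffices to build a \TA\ $\Au_{\mathcal{E}_i}$ for each existential statement and take their union. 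The key point --- already stressed just before the proposition --- is that $\mathcal{E}_i = \exists o_1[x_1{=}v_1]\cdots\exists o_n[x_n{=}v_n].\mathcal{C}$ quantifies over an \emph{a priori bounded} number $n$ of token names; since the only unbounded memory of a \TA\ is in its clocks, such an automaton can guess the $n$ witnessing tokens while scanning the code and record their start/end times in $2n$ clocks $c_{o_j}^{\start},c_{o_j}^{\Ending}$ (plus a never-reset clock $c_{glob}$ holding the current time).

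Concretely, $\Au_{\mathcal{E}_i}$ scans a code $w_\Pi$ and, for each name $o_j$, nondeterministically marks one position as the start-event of $o_j$, checking there that some proposition of $\Main_{x_j}$ of the form $(\cdot,v_j)$ holds (i.e.\ $\psi(\start,v_j)$), and resets $c_{o_j}^{\start}$; its control state then records $o_j$ as ``open on $x_j$'', and at the next position carrying a $\Main_{x_j}$-proposition (which, by the shape of codes, is necessarily the end-event $(v_j,\cdot)$ of \emph{the same} token) it resets $c_{o_j}^{\Ending}$ and marks $o_j$ ``done''. The control component merely stores, per variable, the set of currently open names and the set of already-done names, which gives $2^{O(n)}$ states. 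The automaton accepts only if every name is done and, on the final transition, the conjunction $\mathcal{C}$ is enforced by clock constraints: an interval atom $o_a\leq^{e_1,e_2}_I o_b$ is rendered as $c_{o_a}^{e_1}-c_{o_b}^{e_2}\in I$, which evaluates exactly $e_2(\lambda_\Pi(o_b))-e_1(\lambda_\Pi(o_a))\in I$ irrespective of which of the two time-events occurs first (the difference cancels the common reference time), while a time-point atom constraining the $e$-time of $o_a$ is rendered as $c_{glob}-c_{o_a}^{e}\in I'$, where $I'\in\Intv$ is the interval $n{-}I$ or $n{+}I$ exactly as for the time-point atoms of trigger rules. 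All constants occurring are $O(K_P)$.

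Taking the union of $\Au_{\mathcal{E}_1},\ldots,\Au_{\mathcal{E}_k}$ gives a \TA\ for $\mathcal{R}$, and the product over all trigger-less rules of $R$ gives $\Au_{\exists}$; both constructions are effective for \TA, and since clocks add and states multiply under union and product, the result has $O(N_q)$ clocks, $2^{O(N_q)}$ states and maximal constant $O(K_P)$, and is produced in exponential time. It then remains to check correctness, namely that runs of $\Au_{\mathcal{E}_i}$ on a code $w_\Pi$ are in correspondence with the $\Sigma$-assignments for $\Pi$ consistent with $\mathcal{E}_i$ and that the final clock test faithfully captures satisfaction of $\mathcal{C}$; this is routine. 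I expect the only delicate part to be the ``open/done'' bookkeeping: it must genuinely pin each name's two clocks to one and the same token of the code, which relies on the structural fact that in a code the end-event of a token is the first subsequent event of its variable, and which is easy to break inadvertently when several quantified names range over the same variable.
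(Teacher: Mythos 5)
Your proposal is correct and follows essentially the same route as the paper: exploit the a priori bound on the number of quantified tokens to guess the witnessing events, store their start/end times in $O(N_q)$ clocks (plus a never-reset global clock for time-point atoms), verify interval atoms by diagonal clock constraints and combine the pieces by union over disjuncts and intersection over rules, yielding the stated $2^{O(N_q)}$ states, $O(N_q)$ clocks and maximal constant $O(K_P)$. The delicate point you flag — pinning both clocks of a name to one and the same token via the fact that in a code the end-event of a token is the next event of its variable — is exactly the bookkeeping the construction needs.
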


\paragraph{Conclusion of the construction.}
By applying Propositions~\ref{prop:AtutomataForMultiTimeline}--\ref{prop:TATriggerLessRules} and well-known results about \TA\ and \MTL\ over finite timed words~\cite{ALUR1994183,OuaknineW07},
we obtain the main result of this section.

\begin{theorem}\label{theorem:UpperBounds}
The future TP problem with simple trigger rules is decidable.
Moreover, if the intervals in the atoms of the trigger rules are non-singular
(resp., belong to $\Intv_{(0,\infty)}$), then the problem is in $\EXPSPACE$ (resp., in $\Psp$).
\end{theorem}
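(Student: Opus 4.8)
The plan is to assemble the three constructions already established and reduce to a single automaton‑theoretic emptiness question. By Propositions~\ref{prop:AtutomataForMultiTimeline} and~\ref{prop:TATriggerLessRules} together with the effective closure of \TA\ under intersection~\cite{ALUR1994183}, I would first build one \TA\ $\Au := \Au_{SV}\times\Au_\exists$ over $2^{\Prop}$ whose language is exactly the set of codes of multi-timelines of $SV$ that satisfy every trigger-less rule of $R$; by the size bounds in those propositions, $\Au$ has $2^{O(\sum_{x\in SV}|V_x|+N_q)}$ control states, $|SV|+2+O(N_q)$ clocks, and maximal constant $O(K_P)$ — hence exponentially many states but only \emph{polynomially many} clocks and a maximal constant of polynomial bit-size. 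By Proposition~\ref{prop:MTLTriggerRules}, the linear-size \MTL\ formula $\varphi_\forall$ (maximal constant $O(K_P)$) captures satisfaction of the trigger rules under the future semantics. Combining the reductions, there is a future plan of $P$ iff $\TLang(\Au)\cap\TLang(\varphi_\forall)\neq\emptyset$, i.e.\ iff $(\Au,\varphi_\forall)$ is a positive instance of the existential model-checking problem for \TA\ against \MTL\ over finite timed words. Since that problem is decidable~\cite{OuaknineW07}, decidability of the future TP problem with simple trigger rules follows.

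For the two complexity refinements I would replace the black-box appeal to~\cite{OuaknineW07} by an explicit emptiness check on a product \TA. When the intervals of the trigger rules are non-singular, Proposition~\ref{prop:MTLTriggerRules} guarantees $\varphi_\forall$ is an \MITL\ formula, and when they lie in $\IntvR$ it is an $\MITLR$ formula; in both cases I would translate $\varphi_\forall$ to an equivalent \TA\ $\Au_{\varphi_\forall}$ via the classical constructions of~\cite{Alur:1996}. The quantitative point is that from an $\MITLR$ formula one obtains a \TA\ with \emph{polynomially many} clocks and maximal constant $O(K_P)$, whereas from an \MITL\ formula the number of clocks may be \emph{exponential} (the bounded non-singular windows force exponentially many pending obligations to be tracked). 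I would then form the product $\Au\times\Au_{\varphi_\forall}$ and decide its emptiness by reachability in the associated region automaton. In the $\MITLR$ case all three components contribute only polynomially many clocks and a maximal constant of polynomial bit-size, and the exponentially many control states contribute only polynomially to the bit-size of a single region, so the region automaton has $2^{\mathrm{poly}(|P|)}$ states; in the \MITL\ case it has $2^{2^{O(|P|)}}$ states. Reachability is decidable nondeterministically in space logarithmic in the number of region states, i.e.\ in $\Psp$ (resp.\ $\EXPSPACE$), giving the two upper bounds.

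The main obstacle I anticipate is keeping the complexity accounting tight, not any conceptual difficulty. First, one must carefully justify that the $\MITLR$-to-\TA\ translation really uses only polynomially many clocks, so that composing it with the already exponential-size $\Au_{SV}$ and $\Au_\exists$ does not trigger a second exponential: the product keeps exponentially many states but polynomially many clocks, whence a region automaton of only single-exponential size — this is precisely what confines the $\IntvR$ case to $\Psp$ rather than $\EXPSPACE$. Second, everything must be carried out over \emph{finite} timed words, since \MTL\ model checking of \TA\ is undecidable over infinite words; the cited constructions (the \MITL/$\MITLR$-to-\TA\ translations of~\cite{Alur:1996}, the region construction of~\cite{ALUR1994183}, and the decidability result of~\cite{OuaknineW07}) all hold in the finite-word setting, but one should check that the bounded-length timed words produced by the encoding of Section~\ref{sec:Reduction} are handled correctly, and that emptiness of the product \TA\ can be checked on the fly without materialising the (doubly) exponential region automaton in full.
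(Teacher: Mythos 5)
Your proposal is correct and follows essentially the same route as the paper: assemble $\Au_{SV}$ and $\Au_\exists$ into a single \TA, reduce decidability to existential model checking of \TA\ against \MTL\ over finite timed words via~\cite{OuaknineW07}, and for the two refinements translate the \MITL\ (resp.\ $\MITLR$) formula $\varphi_\forall$ into a \TA\ via~\cite{Alur:1996} and run an on-the-fly $\NPSPACE$ reachability search in the region graph of the product. Your clock/state accounting — polynomially many clocks in the $\MITLR$ case versus exponentially many (through the binary-encoded constant $K_P$) in the \MITL\ case, with the exponential number of control states costing only polynomial space — is exactly the argument the paper uses to separate the $\Psp$ and $\EXPSPACE$ bounds.
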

\begin{proof} 
We fix an instance $P=(SV,R)$ of the problem with maximal constant $K_P$.
Let $N_v := \sum_{x\in SV}|V_x|$, $N_q$ be the overall number of quantifiers in the trigger-less  rules of $R$, and $N_a$ be the overall number of atoms in the
trigger rules of $R$.
By Propositions~\ref{prop:AtutomataForMultiTimeline}--\ref{prop:TATriggerLessRules} and the effective closure
of \TA\ under language intersection~\cite{ALUR1994183}, we can build a \TA\ $\Au_P$  and an \MTL\ formula $\varphi_\forall$ such that there is a future plan of $P$ iff
$\TLang(\Au_P)\cap\TLang(\varphi_\forall)\neq \emptyset$. Moreover, $\Au_P$ has $2^{O(N_q+N_v)}$ states, $O(N_q+|SV|)$ clocks, and maximal constant $O(K_P)$, while $\varphi_\forall$ has $O(N_a+N_v)$ distinct subformulas and
maximal constant $O(K_P)$. By~\cite{OuaknineW07}, checking non-emptiness of  $\TLang(\Au_P)\cap\TLang(\varphi_\forall)$ is decidable. Hence, the first part of the theorem holds. For the
second part, assume that  the intervals in 
the trigger rules are non-singular
(resp., belong to $\Intv_{(0,\infty)}$). By Proposition~\ref{prop:MTLTriggerRules}, $\varphi_\forall$ is an \MITL\ (resp., $\MITLR$) formula. Thus, by~\cite{Alur:1996}, one can build a \TA\ $\Au_\forall$ accepting $\TLang(\varphi_\forall)$ having $ 2^{O(K_P\cdot(N_a+N_v))} $ states, $O(K_P\cdot(N_a+N_v))$ clocks
(resp., $O(2^{(N_a+N_v)})$ states, $O(N_a+N_v)$ clocks), and maximal constant $O(K_P)$.
Non-emptiness of a \TA\ $\Au$ can be solved by an $\NPSPACE$ search algorithm in the \emph{region graph} of $\Au$ which uses space logarithmic in the number of control 
states of $\Au$ and polynomial
in the number of clocks and in the length of the encoding of the maximal constant of $\Au$~\cite{ALUR1994183}.
  Thus, since $\Au_P$, $\Au_\forall$, and the intersection $\Au_\wedge$ of $\Au_P$ and $\Au_\forall$ can be constructed on the fly, and the search in the region graph of
$\Au_\wedge$ can be done without explicitly constructing $\Au_\wedge$, the result follows.
\end{proof} 

\section{Non-primitive recursive hardness of the future TP problem}\label{sec:NPRHardness}

In this section, we establish the following result.
\begin{theorem}\label{theorem:NPRHardness}
Future TP with \emph{one state variable} is non-primitive recursive-hard  even under one of the following two assumptions: \emph{either} (1) the trigger rules are simple,
\emph{or} (2) the intervals  are in $\Intv_{(0,\infty)}$.
\end{theorem}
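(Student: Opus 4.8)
The plan is to prove non-primitive recursive hardness by a reduction from the halting problem for \emph{lossy} (gainy) counter machines, whose halting problem is well known to be Ackermann-hard, hence non-primitive recursive. A gainy counter machine is a deterministic machine with a finite set of instructions manipulating two nonnegative counters via increment, decrement, and zero-test, where counter values may spuriously \emph{increase} (gain) at any step; termination of the lossy/gainy variant is decidable but not primitive recursive. Given such a machine $M$, I would build a TP domain $P=(SV,R)$ with a single state variable whose timelines encode computations of $M$, so that $M$ halts iff $P$ admits a future plan. The two halves of the theorem correspond to two different encodings of the same idea: one using \emph{simple} trigger rules with possibly singular intervals, the other using arbitrary (non-simple) trigger rules but only intervals in $\Intv_{(0,\infty)}$.

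The key construction step is to represent a configuration of $M$ as a bounded-length block of tokens along the single timeline, where the value of a counter is encoded by a \emph{time delay}: e.g.\ the number of time units between two designated token events, or the number of ``unit'' tokens packed into a fixed-length interval, so that a counter value $c$ corresponds to a distance of $c$ (up to the gain slack). A computation is then a sequence of such configuration-blocks laid out in time. First I would fix the transition function of the state variable so that any timeline is a concatenation of well-formed configuration-blocks with the right local structure (start marker, counter-encoding tokens, instruction label, end marker), using the duration constraint functions $D_x$ to pin down the fixed-length ``frame'' of each block; this is enforced purely by the automaton-like structure of $SV$ together with a few trigger-less rules fixing the initial configuration and a trigger-less rule asserting that some halting configuration eventually occurs. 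Then the heart of the reduction is a family of trigger rules, one per instruction type, that relate the counter encoding in a configuration-block to the counter encoding in the \emph{next} block: a token carrying the instruction label triggers an existential statement that ``copies'' the counter distance to the successor block (for the unchanged counter), or copies it shifted by one unit (for increment/decrement), or asserts the distance is zero (for a zero-test branch). Crucially, because we work under the \emph{future semantics}, every trigger fires only toward the future, so the rules naturally propagate constraints forward along the computation; and because the machine is \emph{gainy}, the copying rules only need to enforce inequalities in the ``safe'' direction (the next value is at least / at most the expected one), which is exactly what lets us get away with simple rules or with $\Intv_{(0,\infty)}$-intervals.

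For assumption (1), the simple-trigger encoding, I would make sure each non-trigger token name occurs in at most one interval atom: this forces each trigger rule to compare the trigger time-event with a single reference time-event of each quantified token, which is enough to express ``the $e_1$-time of the counter token in the next block equals (or is within one unit of) the $e_2$-time of the corresponding token in the current block'' provided we are allowed \emph{singular} intervals $[a,a]$ to express exact equality and $[1,1]$ to express the unit shift. For assumption (2), where singular intervals are banned and only intervals $\le n$, $\ge n$ are allowed, I would instead drop simplicity and use two interval atoms per token (one with $\ge n$, one with $\le n$) to ``sandwich'' a value, recovering exact comparisons; here the gain of the counter machine is still needed to keep the direction of the enforced inequalities consistent, but the extra interval atom per token is what makes up for the loss of singular intervals. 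In both cases correctness is a routine simulation argument: a halting run of $M$ yields a multi-timeline satisfying all rules under the future semantics, and conversely any future plan, read block by block, yields a (possibly gainy) halting run of $M$.

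The main obstacle I expect is arranging the \emph{block-to-block counter copy} so that it is simultaneously (a) expressible by a \emph{single} interval atom per non-trigger token (for part (1)) or by only $\Intv_{(0,\infty)}$ intervals (for part (2)), and (b) robust under the future semantics, which already silently adds the conjunct $o_0\leq^{\start,\start}_{[0,+\infty[}o_i$ to every existential statement and thus forbids referring to any token strictly in the past of the trigger. Concretely, the trigger in block $k$ must be able to ``reach'' the counter tokens of block $k+1$ and relate their positions to positions of block $k$'s own counter tokens, even though those latter tokens lie in the past of the instruction-label trigger; the fix is to choose the trigger to be an \emph{early} event of block $k$ (e.g.\ its start marker) and to encode the old counter value by a delay measured \emph{forward} from that very trigger, so that both the old and the new counter values are expressed as future offsets from the single trigger token, keeping every comparison future-directed and, with care, using only one interval atom per token. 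Getting this layout exactly right — and checking that the gain of $M$ exactly compensates for the one-sided slack introduced by the encoding — is where the real work lies; everything else is bookkeeping over the finitely many instruction types.
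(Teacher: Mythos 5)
Your high-level plan matches the paper's: a polynomial-time reduction from the halting problem for gainy counter machines, configurations encoded as blocks along a single timeline, the gainy semantics absorbing the one-sided slack enforced by the rules, and the two cases of the theorem handled by either a singular interval (simple rules) or a combination of $\Intv_{(0,\infty)}$ intervals in a non-simple rule. But the central encoding you commit to --- a counter value $c$ represented as a \emph{time delay} of $c$ units from an early trigger event of the block, with the copy to the next block expressed by relating two such delays --- cannot be made to work in this atom language. Counter values are unbounded, so these delays are unbounded, while every interval atom $o_1\leq^{e_1,e_2}_I o_2$ constrains a \emph{single} difference of time-events to lie in an interval with \emph{fixed} integer endpoints; no conjunction of atoms can assert that two unbounded delays are equal (or differ by one). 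Your proposed fix --- triggering on the block's start marker so that both the old and the new counter values are ``future offsets from the single trigger token'' --- still leaves you needing to compare two unbounded offsets, which is exactly what the atoms cannot express. Relatedly, the fixed temporal ``frame'' of a block cannot be pinned down by the duration functions $D_x$ and trigger-less rules alone, since a block contains unboundedly many tokens; a trigger rule is needed for that too.

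The encoding that works (and that the paper uses) is the one you mention only in passing and then drop: a counter value is the \emph{number} of secondary-value tokens packed into a configuration block of \emph{fixed} duration $2$ --- the density of the time domain is precisely what permits arbitrarily many tokens in two time units. The copy is then done token-by-token: \emph{each} counter-unit token is itself the trigger of a rule requiring a token of the matching value to start exactly $2$ time units later. Since distinct non-dummy tokens have distinct start times (their durations are forced to be positive), this yields an injection into the next block's counter tokens, i.e.\ exactly the gainy inequality $\nu_{j+1}(c)\geq\nu_j(c)$, shifted by one for increments and decrements via marker tags on designated tokens. This also dissolves your future-semantics worry for free: every trigger only looks a fixed constant forward, so no rule ever needs to refer to the past of its trigger. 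The punctual constant $2$ is then expressed either by the singular atom $o\leq^{\start,\start}_{[2,2]}o'$ (case (1), simple rules) or by a non-simple conjunction of $\Intv_{(0,\infty)}$ atoms that sandwich the start-to-start distance at exactly $2$ (case (2)), as you correctly anticipated.
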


Theorem~\ref{theorem:NPRHardness} is proved by a polynomial-time reduction from the halting problem for \emph{Gainy counter machines}~\cite{DemriL09}, a variant of standard Minsky machines, where  the counters may erroneously  increase. Fix such a machine $M = \tpl{Q,q_\init,q_\halt,n, \Delta}$,
where
\begin{inparaenum}[(i)]
  \item  $Q$ is a finite set of (control) locations, $q_\init\in Q$ is the initial location, and $q_\halt\in Q$ is the halting location,
  \item  $n\in\Nat\setminus\{0\}$ is the number of counters, and
  \item $\Delta \subseteq Q\times \Inst \times Q$ is a transition relation over the instruction set $\Inst= \{\inc,\dec,\zero\}\times \{1,\ldots,n\}$.
\end{inparaenum}
We adopt the following notational conventions.
 For an instruction
$\instr\in \Inst$, let $c(\instr)\in\{1,\ldots,n\}$ be the counter associated with $\instr$.
For a transition $\delta\in \Delta$ of the form $\delta=(q,\instr,q')$, define $\From(\delta):= q$, $\instr(\delta):=\instr$, $c(\delta):= c(\instr)$,
and $\To(\delta):= q'$.  We denote by $\instr_\init$ the instruction $(\zero,1)$.
W.l.o.g., we make these assumptions:
\begin{inparaenum}[(i)]
  \item for each transition $\delta\in \Delta$, $\From(\delta)\neq q_\halt$ and $\To(\delta )\neq q_\init$, and
    \item there is exactly one transition in $\Delta$, denoted $\delta_\init$, having as source the initial location $q_\init$.
\end{inparaenum}

An $M$-configuration is a pair $(q,\nu)$ consisting of a location $q\in Q$ and a counter valuation $\nu: \{1,\ldots,n\}\to \Nat$. Given
two valuations $\nu$ and $\nu'$, we write $\nu\geq \nu'$ iff $\nu(c)\geq \nu'(c)$ for all $c\in\{1,\ldots,n\}$.

\details{
Under the \emph{standard Minsky semantics} (with no errors), $M$ induces a transition relation, denoted by $\longrightarrow$, over pairs of $M$-configurations and instructions defined as follows:
 for configurations $(q,\nu)$ and $(q',\nu')$, and instructions $\instr \in \Inst$, $(q,\nu) \der{\instr} (q',\nu')$ if the following holds, where $c\in \{1,\ldots,n\}$ is the counter associated with the instruction
 $\instr$:
\begin{inparaenum}[(i)]
  \item  $(q,\instr,q')\in \Delta$ and $\nu'(c')= \nu(c')$ for all $c'\in \{1,\ldots,n\}\setminus\{c\}$;
  \item  $\nu'(c)= \nu(c) +1$ if $\instr=(\inc,c)$;
  \item $\nu'(c)= \nu(c) -1$ if $\instr=(\dec,c)$ (in particular, $v(c)>0$);
   \item  $\nu'(c)= \nu(c)=0$ if $\instr=(\zero,c)$.
\end{inparaenum}
}

The \emph{gainy semantics} is obtained from the standard Minsky semantics by allowing \emph{incrementing} errors.
Formally, 
$M$  induces a transition relation, denoted by $\longrightarrow_\gainy$, defined as follows:
for configurations $(q,\nu)$ and $(q',\nu')$, and instructions $\instr \in \Inst$, $(q,\nu) \derG{\instr} (q',\nu')$ if the following holds, where $c=c(\instr)$:
\details{
$(q,\nu) \derG{\instr} (q',\nu')$ iff there are valuations $\nu_+$ and $\nu'_+$ such that
$\nu_+ \geq \nu$, $(q,\nu_+) \der{\instr} (q',\nu'_+)$, and $\nu' \geq \nu'_+$. Equivalently, $(q,\nu) \derG{\instr} (q',\nu')$ iff
the following holds, where  $c\in \{1,\ldots,n\}$ is the counter associated with
 $\instr$:}
\begin{inparaenum}[(i)]
  \item  $(q,\instr,q')\in \Delta$ and $\nu'(c')\geq  \nu(c')$ for all $c'\in \{1,\ldots,n\}\setminus\{c\}$;
  \item  $\nu'(c)\geq  \nu(c) +1$ if $\instr=(\inc,c)$;
  \item $\nu'(c)\geq  \nu(c) -1$ if $\instr=(\dec,c)$;
   \item  $\nu(c)=0$ if $\instr=(\zero,c)$.
\end{inparaenum}

A (gainy) computation of $M$ is a finite sequence of global gainy transitions
\[
(q_0,\nu_0) \derG{\instr_0} (q_1,\nu_1) \derG{\instr_1} \cdots  \derG{\instr_{k-1}} (q_k,\nu_k)
\]
$M$ \emph{halts} if there is a computation starting at the \emph{initial} configuration $(q_\init, \nu_\init)$, where $\nu_\init(c) = 0$ for all $c\in \{1,\ldots,n\}$, and leading to some
halting configuration
$(q_{\halt}, \nu)$. The halting problem is to decide whether a given gainy machine $M$ halts, and it was proved to be decidable and non-primitive recursive~\cite{DemriL09}.
We prove the following result, from which Theorem~\ref{theorem:NPRHardness} directly follows.

\begin{proposition}\label{prop:NPRHardness}
One can construct in polynomial time a TP instance $P=(\{x_M\},R_M)$  s.t.\ the trigger rules in $R_M$ are simple (resp., the intervals in $P$ are in $\Intv_{(0,\infty)}$)
and $M$ halts iff there is a future plan of $P$.
\end{proposition}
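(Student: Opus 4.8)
The plan is a polynomial-time reduction from the halting problem for gainy counter machines, which is decidable but not primitive recursive by~\cite{DemriL09}. Fix $M=\tpl{Q,q_\init,q_\halt,n,\Delta}$; I would build $P=(\{x_M\},R_M)$ so that there is a future plan of $P$ iff $M$ halts. A future plan will be a single timeline for $x_M$ spelling out a halting gainy computation $(q_0,\nu_0)\derG{\instr_0}\cdots\derG{\instr_{k-1}}(q_k,\nu_k)$ as a concatenation of \emph{configuration blocks}: block $j$ is a \emph{marker} token recording the transition $\delta_j$ taken from $(q_j,\nu_j)$ (plus just enough context, e.g.\ the instruction $\instr(\delta_{j-1})$ of the previous transition) followed, for each counter $c$, by $\nu_j(c)$ \emph{unit} tokens of value $\langle c,\ldots\rangle$, so that the value of counter $c$ in configuration $j$ is read off as the number of counter-$c$ unit tokens in block $j$. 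The key timing invariant, fixed through $D_{x_M}$ and the timing atoms of $R_M$, is that consecutive blocks are exactly one time unit apart and the unit tokens of a block sit at pairwise distinct times inside that block's time window.

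I would first design the state variable. $V_{x_M}$ has polynomially many marker values (one per relevant pair $(\instr(\delta_{j-1}),\delta_j)$, with terminal decorations for the halting block) and unit values $\langle c,\mathit{flag},\delta_j\rangle$, where $\mathit{flag}$ marks a unit as, e.g., the one exempted from copying at a decrement step, or the fresh one created at an increment step, or terminal. The transition function $T_{x_M}$ enforces the rigid local shape of a block (marker, then the counter-$1$ units, $\ldots$, then the counter-$n$ units, then the next marker), propagates $\delta_j$ through the units so that $\To(\delta_j)=\From(\delta_{j+1})$ and the context agreement between consecutive markers are checkable, and — as existential statements contain no negation — realises the zero tests \emph{structurally}: after a marker whose instruction is $(\zero,c)$ no counter-$c$ unit value is reachable, i.e.\ that block's counter-$c$ section is forced empty; dually, the successor block of an $(\inc,c)$ step is forced to begin its counter-$c$ section with a fresh unit, accounting for the extra $+1$. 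The assumptions $\From(\delta)\neq q_\halt$ and uniqueness of $\delta_\init$ make the initial and halting blocks the first and the last one.

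The rules $R_M$ are: a trigger-less rule pinning block $0$ to $(q_\init,\nu_\init)$ via $\delta_\init$ (all its counter sections empty, forced by $T_{x_M}$); a trigger-less rule asserting that some marker records a transition reaching $q_\halt$; and, for each counter $c$, a \emph{copy} trigger rule firing on every counter-$c$ unit $u$ of a block that is not the exempted (or terminal) one, whose body disjoins, over the finitely many transitions $\delta'$ compatible with the current one, existential statements $\exists u'[x_M=v].\, u\leq^{\start,\start}_{I}u'$, where $v$ ranges over the admissible counter-$c$ unit values of the block following $\delta'$ and $I$ expresses ``distance equal to one block length''. Since blocks have length exactly one, this pins $u'$ into the block immediately following $u$'s block and forces an injective, time-shifting copy of the non-exempted counter-$c$ units of block $j$ into the counter-$c$ units of block $j+1$; together with the structural treatment of decrements (one exempted source) and increments (one structurally forced fresh target), this yields precisely the lower bounds on $\nu_{j+1}(c)$ demanded by the gainy transition relation, while any further, unmatched unit of block $j+1$ is harmless, being exactly an incrementing error. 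Every quantified token lies strictly to the future of its trigger, so the rules hold under the future semantics, and since $u'$ occurs in the single interval atom $u\leq^{\start,\start}_{I}u'$ each such rule is simple whenever $I$ is. Correctness then follows in both directions by unfolding: a halting computation becomes a future plan by laying the blocks out with the prescribed timing and placing block $j+1$'s units at the shifted times of block $j$'s non-exempted ones together with the fresh/exempted ones (room always exists, the windows being real intervals), and conversely, reading any future plan block by block produces a sequence of $M$-configurations connected by valid gainy transitions and reaching $q_\halt$. Everything is clearly computable in polynomial time, and since a single state variable is used this yields Theorem~\ref{theorem:NPRHardness}.

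Finally, the two variants differ only in how ``distance equal to one'' — and the matching duration constraints in $D_{x_M}$ — is phrased. For simple trigger rules we may take $I=[1,1]$ directly, as singularity is not forbidden here, so every non-trigger name stays in one interval atom and all trigger rules are simple. For the $\Intv_{(0,\infty)}$ version singular intervals are excluded, so each ``$=1$'' is replaced by the conjunction of ``$\geq 1$'' and ``$\leq 1$'' (of the allowed forms $[b,+\infty[$ and $[0,a]$); this sacrifices simplicity but keeps every interval occurring in $P$, in the atoms of $R_M$ and in $D_{x_M}$ alike, inside $\Intv_{(0,\infty)}$. The main obstacle I expect is exactly the timing layer of the first paragraph: forcing consecutive blocks to be \emph{exactly} one time unit apart while still letting each block carry an unbounded number of unit tokens at pairwise distinct times, and doing so with only simple trigger rules in one case and only $\Intv_{(0,\infty)}$ intervals in the other — that is, reconciling the rigidity a faithful, injection-based simulation of the counters needs with the severe syntactic restrictions. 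One must also verify that ``irregular'' block timings cannot create spurious plans, e.g.\ by analysing the chains induced by the copy relation.
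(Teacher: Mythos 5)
Your reduction is essentially the paper's: the same encoding of gainy-counter-machine computations as configuration blocks on a single timeline, with counter values read off as numbers of unit tokens, zero tests and the $\pm 1$ adjustments handled structurally via the transition function and marker/flag values, and gainy monotonicity enforced by trigger rules that copy each unit token forward by exactly one block length under the future semantics. The only divergence is the $\Intv_{(0,\infty)}$ gadget for pinning that exact distance: you conjoin $o\leq^{\start,\start}_{[1,+\infty[}o'$ with $o\leq^{\start,\start}_{[0,1]}o'$ on the same pair of names, whereas the paper routes through an intermediate dummy token via $o\leq^{\start,\start}_{[1,+\infty[}o_d\wedge o_d\leq^{\start,\start}_{[1,+\infty[}o'\wedge o\leq^{\start,\start}_{[0,2]}o'$ (with zero-duration dummy values interleaved to make such intermediate tokens available); both are legitimate non-simple rules over $\Intv_{(0,\infty)}$, and your realizability concern about laying out the shifted unit tokens is exactly the detail the dummy tokens are there to absorb.
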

\begin{proof}
%
We focus on the reduction where the intervals in $P$ are in $\Intv_{(0,\infty)}$. At the end of the proof, we show how to adapt the construction for 
the case of simple trigger rules with arbitrary intervals.

First, we define a suitable encoding of a computation of $M$ as a timeline for $x_M$. For this,
we exploit the finite set of symbols $V:= V_{\main}\cup V_{\cont}\cup V_{\dummy}$ corresponding to the finite domain of the state variable $x_M$.
The   sets of \emph{main} values $V_{\main}$ is given by
$
V_\main := \{(\delta,\instr)\in\Delta\times\Inst\mid \instr\neq (\inc,c) \text{ if }\instr(\delta)=(\zero,c)\}.
$
The set of \emph{secondary} values $V_{\cont}$ is defined as ($\#_\inc$ and $\#_\dec$
are two special symbols used as markers):
$
 V_\cont := V_\main \times \{1,\ldots,n\} \times 2^{\{\#_\inc,\#_\dec\}}.
$
Finally, the set of \emph{dummy values} is $(V_{\main}\cup V_{\cont})\times \{\dummy\}$.

Intuitively, in the encoding of an $M$-computation a main value $(\delta,\instr)$ keeps track of the transition $\delta$ used in the current step of the computation, while
$\instr$ represents the instruction exploited in the previous step (if any) of the computation.
The set $V_{\cont}$ is used for encoding counter values, while the set $V_{\dummy}$ is used for specifying punctual time constraints by means of
non-simple trigger rules over 
$\Intv_{(0,\infty)}$. For a word $w\in V^{*}$, we denote by $||w||$ the length of the word
obtained from $w$ by removing dummy symbols.

 For $c\in \{1,\ldots,n\}$ and $v_\main = (\delta,\instr)\in V_\main$,
the \emph{set $\Tag(c,v_\main)$ of markers of counter $c$ for the main value $v_\main$} is the subset of  $\{\#_\inc,\#_\dec\}$ defined as follows:
\begin{inparaenum}[(i)]
  \item $\#_\inc\in \Tag(c,v_\main)$ iff $\instr = (\inc,c)$;
  \item $\#_\dec\in \Tag(c,v_\main)$ iff $\instr(\delta) = (\dec,c)$;
\end{inparaenum}

A \emph{$c$-code for the main value $v_\main= (\delta,\instr)$} is a  finite word $w_c$ over $V_\cont$ such that
\emph{either} (i) $w_c$ is empty and $\#_\inc\notin\Tag(c,v_\main)$, \emph{or} (ii) $\instr(\delta)\neq (\zero,c)$ and $w_c=(v_\main,c,\Tag(c,v_\main))(v_\main,c,\emptyset,\dummy)^{h_0}\cdot (v_\main,c,\emptyset)\cdot (v_\main,c,\emptyset,\dummy)^{h_1}\cdots (v_\main,c,\emptyset)\cdot (v_\main,c,\emptyset,\dummy)^{h_n}$ for some $n\geq 0$ and $h_0,h_1,\ldots,\allowbreak h_n\geq 0$. The $c$-code $w_c$ encodes the value for counter $c$
given by $||w_c||$. Intuitively, $w_c$ can be seen as an interleaving of secondary values with dummy ones, the latter being present only for technical aspects, but not encoding any counter value.

A \emph{configuration-code $w$  for a main value $v_\main=(\delta,\instr)\in V_\main$} is a finite word over $V$
of the form $w= v_\main \cdot (v_\main,\dummy)^{h}\cdot w_1\ldots w_n$, where $h\geq 0$ and for each counter $c\in \{1,\ldots,n\}$, $w_c$ is a $c$-code
for the main value $v_\main$. The configuration-code $w$ encodes the $M$-configuration$(\From(\delta),\nu)$, where $\nu(c)=||w_c||$
for all $c\in \{1,\ldots,n\}$. Note that if $\instr(\delta)=(\zero,c)$, then $\nu(c)=0$ and $\instr\neq (\inc,c)$. Moreover,  the marker $\#_\inc$ occurs in $w$ iff $\instr$ is an increment instruction, and in such a case
 $\#_\inc$ marks the first symbol of the encoding $w_{c(\instr)}$ of counter $c(\instr)$. Intuitively, if the operation performed in the previous step
 of the computation increments counter $c$, then the tag $\#_\inc$ ``marks" the unit of the counter $c$ in the current configuration which has been added by the increment.
Additionally, the marker $\#_\dec$ occurs in $w$ iff $\delta$ is a decrement instruction and the value of counter $c(\delta)$ in $w$ is non-null; in such a case,
 $\#_\dec$ marks the first symbol of the encoding $w_{c(\delta)}$ of counter $c(\delta)$. Intuitively, if the operation to be performed in the current step
 decrements counter $c$ and the current value of 
 $c$ is non-null, then the tag $\#_\dec$  marks  the unit of the counter $c$  in the current configuration which has to be removed by the decrement.

A \emph{computation}-code is a sequence of configuration-codes $\pi= w_{(\delta_0 ,\instr_0)} \cdots w_{(\delta_k,\instr_k)}$, where,  for all $0\leq i\leq k$, $w_{(\delta_i,\instr_i)}$ is a configuration-code with main value $(\delta_i,\instr_i)$, and whenever
  $i<k$, it holds that $\To(\delta_i)=\From(\delta_{i+1})$ and $\instr(\delta_i)=\instr_{i+1}$. Note that by our assumptions $\To(\delta_i)\neq q_\halt$ for all $0\leq i<k$, and
  $\delta_j\neq \delta_\init$ for all $0<j\leq k$.
  The computation-code $\pi$ is \emph{initial} if  the first configuration-code $w_{(\delta_0 ,\instr_0)}$ is $(\delta_\init,\instr_\init)$ (which encodes the initial configuration), and it is \emph{halting} if
  for the last  configuration-code $w_{(\delta_k,\instr_k)}$ in $\pi$, it holds that $\To(\delta_k)=q_\halt$.
For all $0\leq i\leq k$, let $(q_i,\nu_i)$ be the $M$-configuration encoded by the configuration-code $w_{(\delta_i,\instr_i)}$ and $c_i= c(\delta_i)$.
 The computation-code $\pi$ is \emph{well-formed} if, additionally,   for all $0\leq j\leq k-1$,  the following holds:
\begin{inparaenum}[(i)]
  \item $\nu_{j+1}(c)\geq \nu_j(c)$ for all $c\in \{1,\ldots,n\}\setminus \{c_j\}$ (\emph{gainy monotonicity});
\item $\nu_{j+1}(c_j)\geq \nu_j(c_j)+1$ if $\instr(\delta_j)= (\inc,c_j)$ (\emph{increment req.});
 \item $\nu_{j+1}(c_j)\geq \nu_j(c_j)-1$ if $\instr(\delta_j)= (\dec,c_j)$ (\emph{decrement req.}).
\end{inparaenum}
Clearly, 
$M$ halts \emph{iff} there is an initial and halting well-formed computation-code. 

\paragraph*{Definition of $x_M$ and $R_M$.} We now define a state variable $x_M$ and a set $R_M$ of  synchronization rules for $x_M$ with intervals in $\Intv_{(0,\infty)}$ such that the untimed part of every \emph{future plan} of $P=(\{x_M\},R_M)$
is an initial and halting well-formed computation-code. Thus, $M$ halts iff there is a future plan of $P$.

Formally, 
variable $x_M$ is given by $x_M= (V,T,D)$, where, for each $v\in V$,
$D(v)=]0,\infty[$ if $v\notin V_{\dummy}$, and $D(v)=[0,\infty[$ otherwise. Thus, we require that the duration of a non-dummy token 
is always greater than zero (\emph{strict time monotonicity}).
The value transition function $T$ of $x_M$ ensures the following. 
\begin{claim}\label{ref:claim}
The untimed parts of the timelines for $x_M$ whose first token has value $(\delta_\init,\instr_\init)$ correspond
 to the prefixes of  initial computation-codes. Moreover, $(\delta_\init,\instr_\init)\notin T(v)$ for all $v\in V$.
\end{claim}

%

 By construction, it is a trivial task to define $T$ so that the previous requirement is fulfilled. Let $V_\halt=\{(\delta,\instr)\in V_\main\mid \To(\delta)=q_\halt\}$.
 By Claim~\ref{ref:claim} and the assumption that  $\From(\delta)\neq q_\halt$ for each transition $\delta\in \Delta$, for  the initialization and halting requirements,
 it suffices  to ensure that a timeline has a token with value $(\delta_\init,\instr_\init)$ and a token with value in $V_\halt$. This is captured by the trigger-less rules
   $
   \true \rightarrow \exists   o[x_M=(\delta_\init,\instr_\init)].  \true
   $ and  $\true \rightarrow \bigvee_{v\in V_\halt} \exists   o[x_M=v].  \true $.

Finally, the crucial well-formedness requirement is captured by the trigger rules in $R_M$ which express the following punctual time constraints.
Note that we take advantage of the dense temporal domain to allow
for the encoding of arbitrarily large values of counters in two time units.

%
 \begin{compactitem}
   %
   %
   %
   %
    %
   \item \emph{2-Time distance between consecutive main values:} the overall duration of the sequence of tokens corresponding to a configuration-code  amounts exactly to two time units. 
By Claim~\ref{ref:claim}, strict time monotonicity, and the halting requirement, it suffices to ensure that each token $tk$ having a  main value in $V_\main \setminus V_\halt$ is eventually followed by a token $tk'$  such that $tk'$ has a  main value and $\startTime(tk')-\startTime(tk)=2$. To this aim, for each  $v\in V_\main \setminus V_\halt$,  we have the following non-simple trigger rule with intervals in $\Intv_{(0,\infty)}$ which uses a dummy-token for capturing the punctual time constraint:
\[o[x_M=v] \rightarrow \bigvee_{u\in V_\main}\bigvee_{u_d\in V_\dummy} \exists  o'[x_M= u]\exists  o_d[x_M= u_d].  o\leq^{\start,\start}_{[1,+\infty[} o_d \,\wedge\, o_d\leq^{\start,\start}_{[1,+\infty[}o'\, \wedge\, o\leq^{\start,\start}_{[0,2]} o'.\]
   \item 
   For a counter $c\in \{1,\ldots,n\}$, let $V_c\subseteq V_\cont$ be the set of secondary states given
    by $V_\main \times \{c\} \times 2^{\{\#_\inc,\#_\dec\}}$. We require that  each token $tk$  with
    a $V_{c}$-value of the form $((\delta,\instr),c,\Tag)$  such that $c\neq c(\delta)$ and $\To(\delta)\neq q_\halt$ is eventually followed by a token $tk'$ with a $V_{c}$-value such that  $\startTime(tk')-\startTime(tk)=2$.
Note that our encoding, Claim~\ref{ref:claim}, strict time monotonicity, and 2-Time distance between consecutive main values guarantee  that the previous requirement captures \emph{gainy monotonicity}.
 Thus, for each counter $c$ and $v\in V_{c}$ such that $v$ is of the form $((\delta,\instr),c,\Tag)$, where $c\neq c(\delta)$ and $\To(\delta)\neq q_\halt$, we have the following non-simple trigger rule over $\Intv_{(0,\infty)}$:\\
\[o[x_M=v] \rightarrow \bigvee_{u\in V_{c}}\bigvee_{u_d\in V_\dummy} \exists  o'[x_M= u]\exists  o_d[x_M= u_d].  o\leq^{\start,\start}_{[1,+\infty[} o_d \,\wedge\, o_d\leq^{\start,\start}_{[1,+\infty[}o'\, \wedge\, o\leq^{\start,\start}_{[0,2]} o'.\]
  \item For capturing the increment and decrement requirements, by construction, it suffices to enforce that (i) each token $tk$  with
    a $V_{c}$-value of the form $((\delta,\instr),c,\Tag)$  such that $\To(\delta)\neq q_\halt$ and $\delta=(\inc,c)$ is eventually followed by a token $tk'$ with a $V_{c}$-value which is \emph{not} marked by the tag $\#_\inc$ such that  $\startTime(tk')-\startTime(tk)=2$, and (ii)
  each token $tk$  with
    a $V_{c}$-value of the form $((\delta,\instr),c,\Tag)$  such that $\To(\delta)\neq q_\halt$, $\delta=(\dec,c)$, and  $\#_\dec\notin \Tag$ is eventually followed by a token $tk'$ with a $V_{c}$-value  such that  $\startTime(tk')-\startTime(tk)=2$. These requirements can be expressed by non-simple trigger rules with intervals in $\Intv_{(0,\infty)}$ similar to the previous ones.
\end{compactitem}
Finally, to prove Proposition~\ref{prop:NPRHardness} for the case of simple trigger rules with arbitrary intervals, it suffices to remove the dummy values and replace the conjunction
  $o\leq^{\start,\start}_{[1,+\infty[} o_d \,\wedge\, o_d\leq^{\start,\start}_{[1,+\infty[}o'\, \wedge\, o\leq^{\start,\start}_{[0,2]} o'$ in the previous trigger
  rules with the \emph{punctual} atom $ o\leq^{\start,\start}_{[2,2]} o'$.
  This concludes the proof of Proposition~\ref{prop:NPRHardness}. \qedhere
\end{proof}

\section{Hardness of future TP with simple rules and non-singular intervals\label{sec:pspace}}

In this section, we first consider the future TP problem with simple trigger rules and non-singular intervals, and prove that it is $\EXPSPACE$-hard by a polynomial-time reduction from a \emph{domino-tiling problem for grids with rows of single exponential length}, which is known to be  $\EXPSPACE$-complete~\cite{harel92}. Since the reduction is standard, we refer the reader to~\cite{techrepGand} for the details of the construction.

\begin{theorem}
The future TP problem with simple trigger rules and non-singular intervals is $\EXPSPACE$-hard (under polynomial-time reductions).
\end{theorem}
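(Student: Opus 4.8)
The plan is to give a polynomial-time reduction from the $\EXPSPACE$-complete \emph{domino-tiling problem for grids with rows of single-exponential length}: given a finite set of domino types with horizontal and vertical compatibility relations, an initial domino, a final domino, and a bound $n$ in unary, decide whether there is a tiling of a grid of width $2^n$ (and arbitrary finite height) whose first cell carries the initial domino and whose last cell carries the final domino. The idea is to encode a tiling row-by-row as (the untimed part of) a timeline for a single state variable $x_M$, so that consecutive cells of the grid are separated by a fixed time distance, and then use simple future trigger rules with non-singular intervals to enforce horizontal adjacency, vertical adjacency, and the cell-counting that pins the row length to exactly $2^n$.

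First I would design the state variable. Its domain should contain, for each domino type $d$, a ``cell'' value carrying $d$ together with an $n$-bit counter digit annotation and a carry bit (so that an $n$-bit binary counter over $\{0,\dots,2^n-1\}$ can be maintained locally along the timeline, incremented by $1$ modulo $2^n$ from one cell to the next), plus marker values for the beginning/end of a row and for the initial/final cell. The transition function $T$ of $x_M$ is set up so that the untimed sequences of tokens are exactly the syntactically well-formed encodings of candidate tilings: the counter starts at $0$ at the first cell of each row, is incremented cell by cell in the standard bit-by-bit way, wraps to $0$ precisely when a new row begins, the initial domino sits at position $(0,0)$, a final-domino cell eventually occurs, and horizontal compatibility between a cell and its immediate successor on the same row is already checked by $T$. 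Durations are forced to be strictly positive for ordinary tokens exactly as in Proposition~\ref{prop:NPRHardness}, so that ``start-time distance $=2$'' can be used to link a cell to the corresponding cell in the next row.

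Next I would write the trigger rules. The key geometric constraint, \emph{vertical adjacency}, is captured by: every cell token $tk$ (except those in a row containing the final domino, or the last row) must be followed by a cell token $tk'$ at start-time distance exactly $2$ whose counter value equals that of $tk$ and whose domino is vertically compatible with $tk$'s domino. Because the $n$-bit counter value is part of the finite value of the token, ``same counter value'' is just a disjunction over the (exponentially many, but each polynomially described) matching values and needs no extra atoms; the distance-$2$ requirement, being expressed with the non-singular interval $[0,2]$ together with two lower-bound atoms $[1,+\infty[$ as in the proof of Proposition~\ref{prop:NPRHardness}, uses each non-trigger name in at most one interval atom, so the rules are \emph{simple} and the intervals are \emph{non-singular}. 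The fact that exactly one such $tk'$ exists at distance $2$, hence that rows really are stacked, follows from strict time monotonicity plus a ``$2$-time distance between consecutive row-start markers'' rule of the same shape; and the $2^n$-length of each row is guaranteed purely by the counter dynamics enforced in $T$ (wrap-around marks the row boundary). Trigger-less rules $\true\to\exists o[x_M=\text{init-cell}].\true$ and $\true\to\bigvee_{v\ \text{final}}\exists o[x_M=v].\true$ pin down the initial and final cells.

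The correctness argument then splits as usual: from a tiling one reads off a timeline (choosing, say, unit durations) satisfying all rules under the future semantics; conversely, from any future plan one extracts the untimed sequence, argues via Claim-style reasoning that it is a syntactically well-formed encoding, and uses the distance-$2$ trigger rules to recover horizontal/vertical compatibility and the row alignment, hence a legal tiling. Since $n$ is given in unary, the domain of $x_M$ and the set of rules have size polynomial in $n$ and in the number of domino types, so the reduction is polynomial-time; combined with Theorem~\ref{theorem:UpperBounds} this yields $\EXPSPACE$-completeness. The main obstacle is the bookkeeping that forces rows to have length \emph{exactly} $2^n$ and to be \emph{correctly aligned} vertically using only simple, non-singular rules: the trick is to push all the counting into the (finite-state) transition function $T$ of $x_M$ and to let the timing rules only transport the already-computed counter value across the distance-$2$ gap, so that no rule ever needs to compare two non-trigger time-events against a single reference — which is exactly what the ``simple'' restriction forbids. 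This is the delicate part of the construction, and it is why the paper defers the full details to~\cite{techrepGand}.
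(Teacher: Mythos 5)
Your choice of source problem (domino tiling of a grid with rows of length $2^n$) and the overall row-by-row timeline encoding coincide with the reduction the paper refers to, but the central gadget you rely on does not satisfy the restrictions in the statement. You enforce ``start-time distance exactly $2$'' by the conjunction $o\leq^{\start,\start}_{[1,+\infty[} o_d \,\wedge\, o_d\leq^{\start,\start}_{[1,+\infty[}o'\, \wedge\, o\leq^{\start,\start}_{[0,2]} o'$ borrowed from the proof of Proposition~\ref{prop:NPRHardness}, and you claim the resulting rules are simple. They are not: the non-trigger names $o_d$ and $o'$ each occur in \emph{two} interval atoms, and the paper itself explicitly labels these rules non-simple (the simplicity condition only exempts the trigger name). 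This is not a repairable slip: if exact punctual distances between a trigger and a non-trigger token were expressible with simple rules and non-singular intervals, the counter-machine reduction of Proposition~\ref{prop:NPRHardness} would go through under those restrictions and yield non-primitive recursive hardness, contradicting the $\EXPSPACE$ upper bound of Theorem~\ref{theorem:UpperBounds}. The admissible way to pin down durations here is the one used in the paper's $\Psp$-hardness proof: put the lower bound into the constraint function ($D(v)=[2,+\infty[$) and the upper bound into a simple rule relating the trigger to \emph{itself} ($o[x_M=v]\to o\leq^{\start,\Ending}_{[0,2]} o$), which is legal precisely because the trigger name is unconstrained; exact vertical alignment must then be derived from counting tokens of fixed duration, not from punctual atoms on non-trigger tokens.

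A second gap concerns the size of the reduction. If each cell value of $x_M$ carries the full $n$-bit counter, then $V_{x_M}$ has $\Omega(2^n)$ values and the ``same counter value'' disjunction you invoke has exponentially many disjuncts (and there would be exponentially many trigger rules, one per counter value), so the reduction is not polynomial-time. The standard remedy is to spread the position counter over a block of $n$ bit-tokens per cell and to check vertical alignment bit by bit with polynomially many rules; this interacts with the timing gadget above and is exactly the bookkeeping your sketch leaves unresolved.
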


We now focus on the special case 
with intervals of the forms $[0,a]$, with $a\in\Nat\setminus\{0\}$, and $[b,+\infty[$, with $b\in\Nat$, only, proving that it is $\Psp$-hard by reducing periodic SAT to it in polynomial time.

The problem \emph{periodic SAT} is defined as follows~\cite{1994-papadimitriou}.
We are given a Boolean formula $\varphi$ in \emph{conjunctive normal form},
defined over two sets of variables, $\Gamma=\{x_1,\ldots, x_n\}$ and $\Gamma^{+1}=\{x_1^{+1},\ldots , x_n^{+1}\}$, namely,
$
    \varphi = \bigwedge_{t=1}^m (\bigvee_{x\in (\Gamma \cup \Gamma^{+1})\cap L^+_t} x  \vee \bigvee_{x\in (\Gamma \cup \Gamma^{+1})\cap L^-_t}  \neg x),
$
where $m$ is the number of conjuncts of $\varphi$ and, for $1\leq t\leq m$, $L^+_t$ (resp., $L^-_t$) is the set of variables occurring non-negated (resp., negated) in the $t$-th conjunct of $\varphi$.
Moreover, the formula $\varphi^j$, for $j\in\Nat\setminus\{0\}$, is defined as $\varphi$ in which we replace each variable
$x_i\in \Gamma$ by a fresh one $x_i^j$, and $x_i^{+1}\in \Gamma^{+1}$ by $x_i^{j+1}$.
Periodic SAT is to decide the satisfiability of the (infinite-length) formula 
$\Phi= \bigwedge_{j\in\Nat\setminus\{0\}} \varphi^j$, that is, deciding the existence 
of a truth assignment of (infinitely many) variables $x_i^j$, for $i=1,\ldots, n,\, j\in\Nat\setminus\{0\}$, satisfying $\Phi$.
Periodic SAT is $\Psp$-complete~\cite{1994-papadimitriou}; in particular membership to such a class is proved by showing that one can equivalently check satisfiability of the (finite-length) formula $\Phi_f= \bigwedge_{j= 1}^{2^{2n}+1} \varphi^j$. Intuitively, $2^{2n}$ is the number of possible truth assignments to variables of $\Gamma\cup \Gamma^{+1}$, thus, after $2^{2n}+1$ copies of $\varphi$, we can find a repeated assignment: from that point, we can just copy the previous assignments. 
We now 
reduce periodic SAT to our problem.
Hardness also holds when only a single state variable is involved, and also restricting to intervals of the form $[0,a]$.

\begin{theorem}\label{theorem:PSPlowerBound} 
The future TP problem with simple trigger rules and intervals $[0,a]$, with $a\in\Nat\setminus\{0\}$, is $\Psp$-hard  (under polynomial-time reductions).
\end{theorem}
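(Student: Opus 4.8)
The plan is to encode an instance of periodic SAT into a future TP instance with a single state variable $x$ whose timeline, read left to right, lays out the truth assignments to the variables $x_1^j,\dots,x_n^j$ for $j=1,2,\dots$ in increasing order of $j$, one ``block'' of $n$ tokens per copy $\varphi^j$. The key insight I would exploit is the one already used in the $\Psp$-membership argument: it suffices to deal with finitely many copies (up to $2^{2n}+1$), but rather than hard-wiring that bound into the construction, I would let the timeline be finite and of arbitrary length, and use a trigger rule to force that whenever a block is not the last one, the conjuncts $\varphi^j$ relating it to the next block are satisfied; the goal (a trigger-less rule) forces at least one ``final'' block to exist, and the transition function $T$ of $x$ guarantees the blocks are correctly laid out and typed (which variable $x_i$, whether the literal is set true or false, which conjunct positions are ``witnessed'', etc.). The well-formedness of the block structure — each block has exactly $n$ value-tokens in the order $x_1,\dots,x_n$, and adjacent blocks abut — is enforced purely by $T$ plus trigger-less initialization/halting rules, with no timing needed there.

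The heart of the reduction is expressing, by \emph{simple} trigger rules with intervals only of the form $[0,a]$, the satisfaction of each clause of $\varphi^j$, i.e.\ a constraint that couples the current block $j$ with the next block $j{+}1$. I would first arrange, via the durations $D(v)$ and strict time monotonicity, that each block spans exactly one time unit, so that the token encoding variable $x_i$ in block $j$ and the token encoding $x_i^{+1}$ (read from block $j{+}1$) are at start-time distance exactly $1$. A punctual distance of $1$ is \emph{not} directly expressible with intervals $[0,a]$ alone, so I would simulate it the way the paper simulates punctual constraints elsewhere: force a token at distance $\le a$ in one direction and, by a separate rule fired from a later token, force the ``return'' constraint, the two together pinning the relevant event to the right place; alternatively, since the whole timeline is within a bounded number of time units is \emph{not} true here, the cleaner route is to make every token have duration in, say, $[0,1]$ with each \emph{block} summing to $1$ by a rule of the form ``the first token of the next block starts within $[0,1]$ of the first token of this block, and within $[1,+\infty[$'' — but $[1,+\infty[$ is the forbidden unbounded shape, so I must instead anchor block boundaries by a chain of $[0,a]$ atoms whose conjunction forces equality, exactly as the final paragraph of the proof of Proposition~\ref{prop:NPRHardness} does with dummy tokens. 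So the concrete plan: give each block a fixed number of tokens, all of duration in $[0,1]$, and use simple trigger rules with $[0,a]$-intervals to force the total per block to be exactly $1$; then a clause of $\varphi^j$ becomes: from the token carrying $(i,\text{sign})$ in block $j$, existentially reach within $[0,2]$ (or within $[0,a]$ for a suitable constant $a$ bounded by $n$) a token in block $j$ or block $j{+}1$ that ``witnesses'' the clause, the disjunction over witnessing tokens being finite and hence expressible.

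Checking simplicity requires care: in each existential statement of a trigger rule, every non-trigger token name must occur in at most one interval atom. Since each of my trigger rules only needs to place \emph{one} quantified token at a controlled distance from the trigger (the witness, or the next-block anchor), and any auxiliary dummy tokens each appear in a single interval atom of the chain, this constraint is met — this is the same discipline as in Proposition~\ref{prop:NPRHardness}. Finally I would prove the two directions of correctness: (i) from a satisfying assignment of $\Phi$ (equivalently, of $\Phi_f$ with $2^{2n}+1$ copies, using the repetition argument) build a future plan of the stated polynomial size by laying out the corresponding blocks and choosing witness tokens per clause; (ii) from a future plan, read off the assignment from the ``sign'' component of the value-tokens, use strict time monotonicity plus the block-length rules to recover the block structure, and use the clause-witness trigger rules to conclude each $\varphi^j$ holds, hence $\Phi$ is satisfiable.

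The main obstacle I anticipate is the faithful simulation of the ``adjacent block'' alignment using only $[0,a]$ and (disallowed) unbounded-from-below intervals: one must get a genuinely punctual ``distance $1$ between homologous tokens of consecutive blocks'' out of purely upper-bounded atoms, which forces the dummy-token chaining trick and makes the bookkeeping for \emph{simplicity} delicate — each dummy must be used in exactly one interval atom, and the chain must be long enough (length $O(n)$ or $O(1)$ depending on the encoding) to pin the distance while keeping the rule set polynomial. Once that gadget is in place, translating clauses into finite disjunctions of witness-reaching existential statements is routine.
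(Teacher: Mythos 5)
Your overall strategy (reduce from periodic SAT, lay out one block of $n$ value-tokens per copy $\varphi^j$, and check each clause by a future simple trigger rule that reaches a witness in the current or next block within a bounded window) matches the paper's. However, there are two concrete gaps. First, your punctuality gadget does not work in this fragment: a chain or conjunction of $[0,a]$ atoms only ever imposes \emph{upper} bounds on time differences, so no such conjunction can force two events to be at exact distance $1$ (or any exact distance); and the dummy-token chain you invoke from Proposition~\ref{prop:NPRHardness} relies on atoms with intervals $[1,+\infty[$, which are exactly the shape forbidden by the hypothesis of this theorem. The paper's resolution is to get the lower bound from the \emph{constraint function} of the state variable, which is not subject to the restriction on rule intervals: it sets $D(v)=[2,+\infty[$ for every value and adds the simple rule $o[y=v]\to o\leq^{\mathsf{s},\mathsf{e}}_{[0,2]}o$ (an interval atom comparing the trigger with itself, hence harmless for simplicity), so every token has duration exactly $2$ and all inter-token distances become rigid. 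Without some source of lower bounds, your blocks cannot be aligned and the clause-witness windows $[0,a]$ select nothing meaningful.

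Second, your decision to avoid hard-wiring the bound $2^{2n}+1$ makes the reduction unsound. A finite timeline with $k$ blocks only certifies satisfiability of $\bigwedge_{j=1}^{k-1}\varphi^j$, and ``some finite prefix of $\Phi$ is satisfiable'' is strictly weaker than ``$\Phi$ is satisfiable'' (a plan with a single block followed by your final marker would vacuously satisfy every clause rule). The equivalence with $\Phi$ holds only once you reach $2^{2n}+1$ copies, where an assignment pair must repeat. The paper therefore forces exactly the right number of blocks by trigger-less time-point atoms anchoring $\$$ at time $0$, $\tilde\$$ at time $(2^{2n}+1)\cdot 2(n+1)$, and a $stop$ token at $(2^{2n}+2)\cdot 2(n+1)$; these constants are encoded in binary, so the instance stays polynomial. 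You need both fixes — rigid token durations via $D$, and the exponential-length anchor — for the reduction to go through.
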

\begin{proof}
Let us define the state variable $y=(V,T,D)$, where 
\begin{compactenum}
    \item $V=\{\$,\tilde{\$},stop\}\cup \{x_i^\top,x_i^\bot, \tilde{x_i}^\top, \tilde{x_i}^\bot \mid i=1,\ldots, n\}$,
    \item $T(\$)=\{x_1^\top,x_1^\bot\}$, $T(\tilde{\$})=\{\tilde{x_1}^\top, \tilde{x_1}^\bot\}$ and $T(stop)=\{stop\}$,
    \item for $i=1,\ldots, n-1$, $T(x_i^\top)=T(x_i^\bot)=\{x_{i+1}^\top,x_{i+1}^\bot\}$,
    \item for $i=1,\ldots, n-1$, $T(\tilde{x_i}^\top)=T(\tilde{x_i}^\bot)=\{\tilde{x_{i+1}}^\top,\tilde{x_{i+1}}^\bot\}$,
    \item $T(x_n^\top)=T(x_n^\bot)=\{\tilde{\$},stop\}$,
    \item $T(\tilde{x_n}^\top)=T(\tilde{x_n}^\bot)=\{\$,stop\}$, and
    \item for all $v\in\ V$, $D(v)=[2,+\infty[$.
\end{compactenum}
Intuitively, we represent an assignment of variables $x_i^j$ by means of a timeline for $y$:
after every occurrence of the symbol $\$$, $n$ tokens are present, one for each $x_i$, and the value $x_i^\top$ (resp., $x_i^\bot$) represents a positive (resp., negative) assignment of $x_i^j$, for some \emph{odd} $j\geq 1$. Then, there is an occurrence of $\tilde{\$}$, after which $n$ more tokens occur, again one for each $x_i$, and the value $\tilde{x_i}^\top$ (resp., $\tilde{x_i}^\bot$) represents a positive (resp., negative) assignment of $x_i^j$, for some \emph{even} $j\geq 2$.
See Figure~\ref{fig:phij} for an example.
\begin{figure}[t]
    \centering
    \includegraphics[scale=0.7]{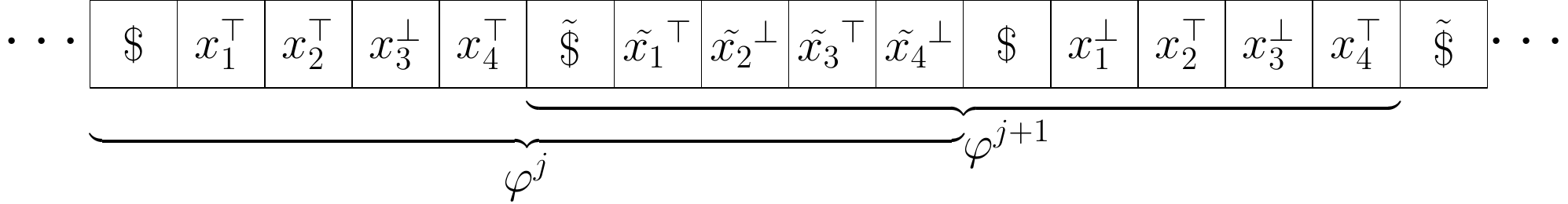}
    \vspace{-0.4cm}
    \caption{Let
    the formula $\varphi$ be defined over two sets of variables, $\Gamma=\{x_1,x_2,x_3,x_4\}$ and $\Gamma^{+1}=\{x_1^{+1},x_2^{+1},x_3^{+1},x_4^{+1}\}$. 
    The $j$-th copy (we assume $j$ is odd) of $\varphi$, i.e., $\varphi^j$, is satisfied by the assignment $x_1^j\mapsto \top$, $x_2^j\mapsto \top$, $x_3^j\mapsto \bot$, $x_4^j\mapsto \top$, $x_1^{j+1}\mapsto \top$, $x_2^{j+1}\mapsto \bot$, $x_3^{j+1}\mapsto \top$, $x_4^{j+1}\mapsto \bot$. The analogous for $\varphi^{j+1}$. }
    \label{fig:phij}
\end{figure}

We start with the next simple trigger rules, one for each $v\in V$: $o[y=v]\to o\leq^{\mathsf{s},\mathsf{e}}_{[0,2]} o$. Paired with the function $D$, they enforce
all tokens' durations to be \emph{exactly} 2: intuitively, since we exclude singular intervals, requiring, for instance, that a token $o'$ starts  $t$ instants of time after the end of $o$, with $t\in [\ell,\ell+1]$ and $\ell\in\Nat$ is even, boils down to $o'$ starting \emph{exactly} $\ell$ instants after the end of $o$. We also observe that, given the constant token duration, the density of the time domain does not play any role in this proof.

We now add the rules:
\begin{inparaenum}[(i)]
\item
    $\true \to \exists o[y=\$]. o\geq^{\mathsf{s}}_{[0,1]} 0$;
\item
    $\true \to \exists o[y=\tilde{\$}]. o\geq^{\mathsf{s}}_{[0,1]} (2^{2n}+1)\cdot 2(n+1)$;
\item
    $\true \to \exists o[y=stop]. o\geq^{\mathsf{s}}_{[0,1]} (2^{2n}+2)\cdot 2(n+1)$.
\end{inparaenum}
They respectively impose that $(i)$~a token with value $\$$ starts exactly at $t=0$ (recall that the duration of every token is 2);
$(ii)$~there exists a token with value $\tilde{\$}$ starting at $t=(2^{2n}+1)\cdot 2(n+1)$; 
$(iii)$~a token with value $stop$ starts at $t=(2^{2n}+2)\cdot 2(n+1)$. 
We are forcing the timeline to encode truth assignments for variables $x_1^1,\ldots, x_n^1,\ldots, x_1^{2^{2n}+2} ,\ldots , x_n^{2^{2n}+2}$: as a matter of fact, we will decide satisfiability of the finite formula $\Phi_f= \bigwedge_{j= 1}^{2^{2n}+1} \varphi^j$, which is equivalent to $\Phi$.


We now consider the next rules, that enforce the satisfaction of each $\varphi^j$ or,
equivalently, of $\varphi$ over the assignments of $(x_1^j,\ldots, x_n^j, x_1^{j+1},\ldots, x_n^{j+1})$.
For the $t$-th conjunct of $\varphi$, 
we define the future simple rule:
\begin{multline*}
    o[y=\tilde{\$}]\to \Big(\smashoperator{\bigvee_{x_i\in \Gamma\cap L^+_t}} \exists o'[y=\tilde{x_i}^\top]. o\leq ^{\mathsf{e},\mathsf{s}}_{[0,4n]} o' \Big) \vee 
        \Big(\smashoperator{\bigvee_{x_i^{+1}\in \Gamma^{+1}\cap L^+_t}} \exists o'[y=x_i^\top]. o\leq ^{\mathsf{e},\mathsf{s}}_{[0,4n]} o' \Big) \vee \\
        \Big(\smashoperator{\bigvee_{x_i\in \Gamma\cap L^-_t}} \exists o'[y=\tilde{x_i}^\bot]. o\leq ^{\mathsf{e},\mathsf{s}}_{[0,4n]} o' \Big) \vee 
        \Big(\smashoperator{\bigvee_{x_i^{+1}\in \Gamma^{+1}\cap L^-_t}} \exists o'[y=x_i^\bot]. o\leq ^{\mathsf{e},\mathsf{s}}_{[0,4n]} o' \Big) \vee \\
        \exists o''[y=stop]. o\leq ^{\mathsf{e},\mathsf{s}}_{[0,2n]} o''.
\end{multline*}
Basically, this rule (the rule where the trigger has value $\$$ being analogous) states that, after every occurrence of $\tilde{\$}$, a token $o'$, making true at least a (positive or negative) literal in the conjunct, must occur by $4n$ time instants (i.e., before the following occurrence of $\tilde{\$}$).
The disjunct $\exists o''[y=stop]. o\leq ^{\mathsf{e},\mathsf{s}}_{[0,2n]} o'' $ is present just to avoid evaluating $\varphi$ on the $n$ tokens before (the first occurrence of) $stop$.

The variable $y$ and all synchronization rules can be generated in time polynomial in $|\varphi|$ (in particular, all interval bounds and time constants of time-point atoms have a value, encoded in binary, in $O(2^{2n})$).
\end{proof}
\section{Conclusions and future work}
In this paper, we investigated decidability and complexity issues for TP over dense temporal domains. Such a problem is known to be undecidable~\cite{kr18} even if restricted to simple trigger rules. 
Here, we have shown that decidability can be recovered by adding the future semantics to simple trigger rules.
Moreover, future TP with simple trigger rules has been proved to be non-primitive recursive-hard 
(the same result holds in the case of future TP with all intervals being in $\Intv_{(0,\infty)}$).
Finally, if, additionally, singular intervals are avoided, it turns out to be $\EXPSPACE$-complete, 
and $\Psp$-complete if we consider only intervals in $\Intv_{(0,\infty)}$.

Future work will focus on decidability of future TP with arbitrary trigger rules which remains open.

\bibliographystyle{eptcs}
\bibliography{bib}

\begin{thebibliography}{10}
\providecommand{\bibitemdeclare}[2]{}
\providecommand{\surnamestart}{}
\providecommand{\surnameend}{}
\providecommand{\urlprefix}{Available at }
\providecommand{\url}[1]{\texttt{#1}}
\providecommand{\href}[2]{\texttt{#2}}
\providecommand{\urlalt}[2]{\href{#1}{#2}}
\providecommand{\doi}[1]{doi:\urlalt{http://dx.doi.org/#1}{#1}}
\providecommand{\bibinfo}[2]{#2}

\bibitemdeclare{article}{ALUR1994183}
\bibitem{ALUR1994183}
\bibinfo{author}{R.~\surnamestart Alur\surnameend} \& \bibinfo{author}{D.~L.
  \surnamestart Dill\surnameend} (\bibinfo{year}{1994}):
  \emph{\bibinfo{title}{A theory of timed automata}}.
\newblock {\sl \bibinfo{journal}{Theoretical Computer Science}}
  \bibinfo{volume}{126}(\bibinfo{number}{2}), pp. \bibinfo{pages}{183--235},
  \doi{10.1016/0304-3975(94)90010-8}.

\bibitemdeclare{article}{Alur:1996}
\bibitem{Alur:1996}
\bibinfo{author}{R.~\surnamestart Alur\surnameend},
  \bibinfo{author}{T.~\surnamestart Feder\surnameend} \& \bibinfo{author}{T.~A.
  \surnamestart Henzinger\surnameend} (\bibinfo{year}{1996}):
  \emph{\bibinfo{title}{The Benefits of Relaxing Punctuality}}.
\newblock {\sl \bibinfo{journal}{Journal of the ACM}} \bibinfo{volume}{43(1)},
  pp. \bibinfo{pages}{116--146}, \doi{10.1145/227595.227602}.

\bibitemdeclare{article}{AlurH93}
\bibitem{AlurH93}
\bibinfo{author}{R.~\surnamestart Alur\surnameend} \& \bibinfo{author}{T.~A.
  \surnamestart Henzinger\surnameend} (\bibinfo{year}{1993}):
  \emph{\bibinfo{title}{{Real-Time Logics: Complexity and Expressiveness}}}.
\newblock {\sl \bibinfo{journal}{Information and Computation}}
  \bibinfo{volume}{104}(\bibinfo{number}{1}), pp. \bibinfo{pages}{35--77},
  \doi{10.1006/inco.1993.1025}.

\bibitemdeclare{article}{AlurH94}
\bibitem{AlurH94}
\bibinfo{author}{R.~\surnamestart Alur\surnameend} \& \bibinfo{author}{T.~A.
  \surnamestart Henzinger\surnameend} (\bibinfo{year}{1994}):
  \emph{\bibinfo{title}{{A Really Temporal Logic}}}.
\newblock {\sl \bibinfo{journal}{Journal of the {ACM}}}
  \bibinfo{volume}{41}(\bibinfo{number}{1}), pp. \bibinfo{pages}{181--204},
  \doi{10.1145/174644.174651}.

\bibitemdeclare{inproceedings}{barreiro2012europa}
\bibitem{barreiro2012europa}
\bibinfo{author}{J.~\surnamestart Barreiro\surnameend},
  \bibinfo{author}{M.~\surnamestart Boyce\surnameend},
  \bibinfo{author}{M.~\surnamestart Do\surnameend},
  \bibinfo{author}{J.~\surnamestart Frank\surnameend},
  \bibinfo{author}{M.~\surnamestart Iatauro\surnameend},
  \bibinfo{author}{T.~\surnamestart Kichkaylo\surnameend},
  \bibinfo{author}{P.~\surnamestart Morris\surnameend},
  \bibinfo{author}{J.~\surnamestart Ong\surnameend},
  \bibinfo{author}{E.~\surnamestart Remolina\surnameend},
  \bibinfo{author}{T.~\surnamestart Smith\surnameend} \&
  \bibinfo{author}{D.~\surnamestart Smith\surnameend} (\bibinfo{year}{2012}):
  \emph{\bibinfo{title}{{EUROPA: A Platform for AI Planning, Scheduling,
  Constraint Programming, and Optimization}}}.
\newblock In: {\sl \bibinfo{booktitle}{Proceedings of ICKEPS}}.

\bibitemdeclare{techreport}{techrepGand}
\bibitem{techrepGand}
\bibinfo{author}{L.~\surnamestart Bozzelli\surnameend},
  \bibinfo{author}{A.~\surnamestart Molinari\surnameend},
  \bibinfo{author}{A.~\surnamestart Montanari\surnameend} \&
  \bibinfo{author}{A.~\surnamestart Peron\surnameend} (\bibinfo{year}{2018}):
  \emph{\bibinfo{title}{{Complexity of timeline-based planning over dense
  temporal domains: exploring the middle ground}}}.
\newblock \bibinfo{type}{Technical Report} \bibinfo{number}{2/2018},
  \bibinfo{institution}{University of Udine}, \bibinfo{address}{Italy}.
\newblock
  \urlprefix\url{https://www.dimi.uniud.it/assets/preprints/2-2018-molinari.pdf}.

\bibitemdeclare{inproceedings}{kr18}
\bibitem{kr18}
\bibinfo{author}{L.~\surnamestart Bozzelli\surnameend},
  \bibinfo{author}{A.~\surnamestart Molinari\surnameend},
  \bibinfo{author}{A.~\surnamestart Montanari\surnameend} \&
  \bibinfo{author}{A.~\surnamestart Peron\surnameend} (\bibinfo{year}{2018}):
  \emph{\bibinfo{title}{{Decidability and Complexity of Timeline-based Planning
  over Dense Temporal Domains}}}.
\newblock In: {\sl \bibinfo{booktitle}{Proceedings of KR}}.
\newblock
  \urlprefix\url{https://www.uniud.it/it/ateneo-uniud/ateneo-uniud-organizzazione/dipartimenti/dmif/assets/preprints/1-2018-molinari}.

\bibitemdeclare{inproceedings}{ictcs18}
\bibitem{ictcs18}
\bibinfo{author}{L.~\surnamestart Bozzelli\surnameend},
  \bibinfo{author}{A.~\surnamestart Molinari\surnameend},
  \bibinfo{author}{A.~\surnamestart Montanari\surnameend},
  \bibinfo{author}{A.~\surnamestart Peron\surnameend} \&
  \bibinfo{author}{G.~\surnamestart Woeginger\surnameend}
  (\bibinfo{year}{2018}): \emph{\bibinfo{title}{Timeline-Based Planning over
  Dense Temporal Domains with Trigger-less Rules is {NP}-Complete}}.
\newblock In: {\sl \bibinfo{booktitle}{Proceedings of ICTCS}}.

\bibitemdeclare{inproceedings}{CestaCFOP07}
\bibitem{CestaCFOP07}
\bibinfo{author}{A.~\surnamestart Cesta\surnameend},
  \bibinfo{author}{G.~\surnamestart Cortellessa\surnameend},
  \bibinfo{author}{S.~\surnamestart Fratini\surnameend},
  \bibinfo{author}{A.~\surnamestart Oddi\surnameend} \&
  \bibinfo{author}{N.~\surnamestart Policella\surnameend}
  (\bibinfo{year}{2007}): \emph{\bibinfo{title}{{A}n {I}nnovative {P}roduct for
  {S}pace {M}ission {P}lanning: {A}n {A} {P}osteriori {E}valuation}}.
\newblock In: {\sl \bibinfo{booktitle}{Proceedings of ICAPS}}, pp.
  \bibinfo{pages}{57--64}.

\bibitemdeclare{inproceedings}{aspen2010}
\bibitem{aspen2010}
\bibinfo{author}{S.~\surnamestart Chien\surnameend},
  \bibinfo{author}{D.~\surnamestart Tran\surnameend},
  \bibinfo{author}{G.~\surnamestart Rabideau\surnameend}, \bibinfo{author}{S.R.
  \surnamestart Schaffer\surnameend}, \bibinfo{author}{D.~\surnamestart
  Mandl\surnameend} \& \bibinfo{author}{S.~\surnamestart Frye\surnameend}
  (\bibinfo{year}{2010}): \emph{\bibinfo{title}{Timeline-Based Space Operations
  Scheduling with External Constraints}}.
\newblock In: {\sl \bibinfo{booktitle}{Proceedings of ICAPS}}, pp.
  \bibinfo{pages}{34--41}.

\bibitemdeclare{article}{MayerOU16}
\bibitem{MayerOU16}
\bibinfo{author}{M.~\surnamestart {Cialdea Mayer}\surnameend},
  \bibinfo{author}{A.~\surnamestart Orlandini\surnameend} \&
  \bibinfo{author}{A.~\surnamestart Umbrico\surnameend} (\bibinfo{year}{2016}):
  \emph{\bibinfo{title}{{P}lanning and {E}xecution with {F}lexible {T}imelines:
  a {F}ormal {A}ccount}}.
\newblock {\sl \bibinfo{journal}{Acta Informatica}}
  \bibinfo{volume}{53}(\bibinfo{number}{6--8}), pp. \bibinfo{pages}{649--680},
  \doi{10.1007/s00236-015-0252-z}.

\bibitemdeclare{article}{DemriL09}
\bibitem{DemriL09}
\bibinfo{author}{S.~\surnamestart Demri\surnameend} \&
  \bibinfo{author}{R.~\surnamestart Lazic\surnameend} (\bibinfo{year}{2009}):
  \emph{\bibinfo{title}{{LTL} with the freeze quantifier and register
  automata}}.
\newblock {\sl \bibinfo{journal}{{ACM Transactions on Computational Logic}}}
  \bibinfo{volume}{10}(\bibinfo{number}{3}), pp. \bibinfo{pages}{16:1--16:30},
  \doi{10.1145/1507244.1507246}.

\bibitemdeclare{article}{FrankJ03}
\bibitem{FrankJ03}
\bibinfo{author}{J.~\surnamestart Frank\surnameend} \&
  \bibinfo{author}{A.~\surnamestart J{\'o}nsson\surnameend}
  (\bibinfo{year}{2003}): \emph{\bibinfo{title}{Constraint-based {A}ttribute
  and {I}nterval {P}lanning}}.
\newblock {\sl \bibinfo{journal}{Constraints}}
  \bibinfo{volume}{8}(\bibinfo{number}{4}), pp. \bibinfo{pages}{339--364},
  \doi{10.1023/A:1025842019552}.

\bibitemdeclare{inproceedings}{GiganteMCO16}
\bibitem{GiganteMCO16}
\bibinfo{author}{N.~\surnamestart Gigante\surnameend},
  \bibinfo{author}{A.~\surnamestart Montanari\surnameend},
  \bibinfo{author}{M.~\surnamestart {Cialdea Mayer}\surnameend} \&
  \bibinfo{author}{A.~\surnamestart Orlandini\surnameend}
  (\bibinfo{year}{2016}): \emph{\bibinfo{title}{{T}imelines are {E}xpressive
  {E}nough to {C}apture {A}ction-Based {T}emporal {P}lanning}}.
\newblock In: {\sl \bibinfo{booktitle}{Proceedings of TIME}}, pp.
  \bibinfo{pages}{100--109}, \doi{10.1109/TIME.2016.18}.

\bibitemdeclare{inproceedings}{GiganteMCO17}
\bibitem{GiganteMCO17}
\bibinfo{author}{N.~\surnamestart Gigante\surnameend},
  \bibinfo{author}{A.~\surnamestart Montanari\surnameend},
  \bibinfo{author}{M.~\surnamestart {Cialdea Mayer}\surnameend} \&
  \bibinfo{author}{A.~\surnamestart Orlandini\surnameend}
  (\bibinfo{year}{2017}): \emph{\bibinfo{title}{Complexity of Timeline-Based
  Planning}}.
\newblock In: {\sl \bibinfo{booktitle}{Proceedings of ICAPS}}, pp.
  \bibinfo{pages}{116--124}.

\bibitemdeclare{book}{harel92}
\bibitem{harel92}
\bibinfo{author}{D.~\surnamestart Harel\surnameend} (\bibinfo{year}{1992}):
  \emph{\bibinfo{title}{Algorithmics: The spirit of computing}},
  \bibinfo{edition}{2nd} edition.
\newblock \bibinfo{publisher}{Wesley}.

\bibitemdeclare{inproceedings}{JonssonMMRS00}
\bibitem{JonssonMMRS00}
\bibinfo{author}{A.~K. \surnamestart J{\'{o}}nsson\surnameend},
  \bibinfo{author}{P.~H. \surnamestart Morris\surnameend},
  \bibinfo{author}{N.~\surnamestart Muscettola\surnameend},
  \bibinfo{author}{K.~\surnamestart Rajan\surnameend} \& \bibinfo{author}{B.~D.
  \surnamestart Smith\surnameend} (\bibinfo{year}{2000}):
  \emph{\bibinfo{title}{{P}lanning in {I}nterplanetary {S}pace: {T}heory and
  {P}ractice}}.
\newblock In: {\sl \bibinfo{booktitle}{Proceedings of ICAPS}}, pp.
  \bibinfo{pages}{177--186}.

\bibitemdeclare{article}{Koymans90}
\bibitem{Koymans90}
\bibinfo{author}{R.~\surnamestart Koymans\surnameend} (\bibinfo{year}{1990}):
  \emph{\bibinfo{title}{Specifying Real-Time Properties with Metric Temporal
  Logic}}.
\newblock {\sl \bibinfo{journal}{Real-Time Systems}}
  \bibinfo{volume}{2}(\bibinfo{number}{4}), pp. \bibinfo{pages}{255--299},
  \doi{10.1007/BF01995674}.

\bibitemdeclare{incollection}{Muscettola94}
\bibitem{Muscettola94}
\bibinfo{author}{N.~\surnamestart Muscettola\surnameend}
  (\bibinfo{year}{1994}): \emph{\bibinfo{title}{{{HSTS}}: {I}ntegrating
  {P}lanning and {S}cheduling}}.
\newblock In: {\sl \bibinfo{booktitle}{Intelligent Scheduling}},
  \bibinfo{publisher}{Morgan Kaufmann}, pp. \bibinfo{pages}{169--212}.

\bibitemdeclare{article}{OuaknineW07}
\bibitem{OuaknineW07}
\bibinfo{author}{J.~\surnamestart Ouaknine\surnameend} \&
  \bibinfo{author}{J.~\surnamestart Worrell\surnameend} (\bibinfo{year}{2007}):
  \emph{\bibinfo{title}{On the decidability and complexity of Metric Temporal
  Logic over finite words}}.
\newblock {\sl \bibinfo{journal}{Logical Methods in Computer Science}}
  \bibinfo{volume}{3}(\bibinfo{number}{1}), \doi{10.2168/LMCS-3(1:8)2007}.

\bibitemdeclare{book}{1994-papadimitriou}
\bibitem{1994-papadimitriou}
\bibinfo{author}{C.~M. \surnamestart Papadimitriou\surnameend}
  (\bibinfo{year}{1994}): \emph{\bibinfo{title}{{Computational complexity}}}.
\newblock \bibinfo{publisher}{Addison-Wesley}, \bibinfo{address}{Reading,
  Massachusetts}.

\end{thebibliography}
\end{document}